\def\doi#1{\href{https://doi.org/\detokenize{#1}}{\url{https://doi.org/\detokenize{#1}}}}
\newtheorem{theorem}{Theorem}
\newtheorem{lemma}[theorem]{Lemma}
\newtheorem{corollary}[theorem]{Corollary}
\newtheorem{proposition}[theorem]{Proposition}
\newtheorem{observation}[theorem]{Observation}
\newtheorem{definition}[theorem]{Definition}
\newtheorem{problem1}[theorem]{Problem}
\newcommand{\N}{\ensuremath{\mathbb{N}}}
\newcommand{\R}{\ensuremath{\mathcal{R}}}
\newcommand{\Q}{\ensuremath{\mathcal{Q}}}
\newcommand{\NP}{\ensuremath{\mathcal{NP}}}
\newcommand{\A}{\ensuremath{\mathcal{A}}}
\renewcommand{\O}{\ensuremath{{\cal O}}}
\newcommand{\cf}{\ensuremath{{\cal F}}}
\newcommand{\cl}{\ensuremath{{\cal L}}}
\newcommand\mdoubleplus{\ensuremath{\mathbin{+\mkern-3mu+}}}
\begin{document}

\title{Linearizing Partial Search Orders}
	
\author{Robert Scheffler}
	
\date{Institute of Mathematics, Brandenburg University of Technology, Cottbus, Germany \\ \texttt{robert.scheffler@b-tu.de}}
	
  	\maketitle

 	\begin{abstract}
		In recent years, questions about the construction of special orderings of a given graph search were studied by several authors. On the one hand, the so called end-vertex problem introduced by Corneil et al. in 2010 asks for search orderings ending in a special vertex. On the other hand, the problem of finding orderings that induce a given search tree was introduced already in the 1980s by Hagerup and received new attention most recently by Beisegel et al. Here, we introduce a generalization of some of these problems by studying the question whether there is a search ordering that is a linear extension of a given partial order on a graph's vertex set. We show that this problem can be solved in polynomial time on chordal bipartite graphs for LBFS, which also implies the first polynomial-time algorithms for the end-vertex problem and two search tree problems for this combination of graph class and search. Furthermore, we present polynomial-time algorithms for LBFS and MCS on split graphs which generalize known results for the end-vertex and search tree problems.
	\end{abstract} 	
	
	\section{Introduction}

The graph searches \emph{Breadth First Search} (BFS) and \emph{Depth First Search} (DFS) are considered as some of the most basic algorithms in both graph theory and computer science. Taught in many undergraduate courses around the world, they are an elementary component of several graph algorithms. There are also many other more sophisticated graph searches, e.g., the \emph{Lexicographic Breadth First Search} (LBFS)~\cite{rose1976algorithmic} or the \emph{Maximum Cardinality Search} (MCS)~\cite{tarjan1984simple} which are also used to solve several graph problems among them the recognition problems of graph classes~\cite{bretscher2008simple,chu2008simple,corneil2009lbfs,dusart2017new}, the computation of minimal separators~\cite{kumar1998minimal} as well as the computation of minimal triangulations~\cite{berry2004maximum}.

In recent years, different problems of finding special search orderings have gained attention from several researchers. One of these problems is the end-vertex problem introduced in 2010 by Corneil et~al.~\cite{corneil2010end}. It asks whether a given vertex in a graph can be visited last by some graph search. The problem was motivated by \emph{multi-sweep algorithms} where a search is applied several times to a graph such that every application starts in the vertex where the preceding search ordering has ended. Corneil et al.~\cite{corneil2010end} showed that the end-vertex problem is \NP-complete for LBFS on weakly chordal graphs. Similar results were obtained for other searches such as BFS, DFS and MCS~\cite{beisegel2019end-vertex,cao2019graph,charbit2014influence}, while for several graph classes, among them split graphs, polynomial-time algorithms were presented in~\cite{beisegel2019end-vertex,cao2019graph,charbit2014influence,corneil2010end,gorzny2017end-vertices}.

An important structure closely related to a graph search is the corresponding search tree. Such a tree contains all the vertices of the graph and for every vertex different from the start vertex exactly one edge to a vertex preceding it in the search ordering. Such trees can be of particular interest as for instance the tree obtained by a BFS contains the shortest paths from the root to all other vertices in the graph and the trees generated by DFS can be used for fast planarity testing~\cite{hopcroft1974efficient}. The problem of deciding whether a given spanning tree of a graph can be obtained by a particular search was introduced by Hagerup~\cite{hagerup1985biconnected} in 1985, who presented a linear-time algorithm for recognizing DFS-trees. In the same year, Hagerup and Nowak~\cite{hagerup1985recognition} gave a similar result for the BFS-tree recognition. 
Recently, Beisegel et al.~\cite{beisegel2019recognizing,beisegel2021recognition} introduced a more general framework for the search tree recognition problem. They introduced the term \cf-tree for search trees where a vertex is connected to its first visited neighbor, i.e., BFS-like trees, and \cl-trees for search trees where a vertex is connected to its most recently visited neighbor, i.e., DFS-like trees. They showed, among other things, that the \cf-tree recognition is \NP-complete for LBFS and MCS on weakly chordal graphs, while the problem can be solved in polynomial time for both searches on chordal graphs.

\paragraph*{Our Contribution.} 
There seems to be a strong relationship between the complexity of the end-vertex problem and the recognition problem of $\cf$-trees. There are many combinations of graph classes and graph searches where both problems are \NP-complete or both problems are solvable in polynomial time. To further study this relationship, we present a generalization of these two problems by introducing the \emph{Partial Search Order Problem (PSOP)} of a graph search $\A$. Given a graph $G$ and a partial order $\pi$ on its vertex set, it asks whether there is a search ordering produced by $\A$ which is a linear extension of $\pi$. We show that a greedy algorithm solves the PSOP of Generic Search, i.e., the search where every vertex can be visited next as long as it has an already visited neighbor.  Furthermore, we present a polynomial-time algorithm for the PSOP of LBFS on chordal bipartite graphs, i.e., bipartite graphs without induced cycles of length larger than four. This result also implies a polynomial-time algorithm for the end-vertex problem on chordal bipartite graphs, a generalization of the result by Gorzny and Huang~\cite{gorzny2017end-vertices} on the end-vertex problem of LBFS on AT-free bipartite graphs. For split graphs, we will give polynomial-time algorithms for the PSOP of LBFS and MCS that generalize the results on the end-vertex problem~\cite{beisegel2019end-vertex,charbit2014influence} and the \cf-tree problem~\cite{beisegel2019recognizing} of these searches on this graph class.

	\section{Preliminaries}

\paragraph*{General Notation.}
The graphs considered in this paper are finite, undirected, simple and connected. Given a graph $G$, we denote by $V(G)$ the set of vertices and by $E(G)$ the set of edges. For a vertex $v\in V(G)$, we denote by $N(v)$ the \emph{neighborhood} of $v$ in $G$, i.e., the set $N(v)=\{u\in V\mid uv\in E\}$, where an edge between $u$ and $v$ in $G$ is denoted by $uv$. The \emph{neighborhood of a set $A \subset V(G)$} is the union of the neighborhoods of the vertices in $A$. The \emph{distance} of a vertex $v$ to a vertex $w$ is the number of edges of the shortest path from $v$ to $w$. The set $N^\ell(v)$ contains all vertices whose distance to the vertex $v$ is equal to $\ell$.

A \emph{clique} in a graph $G$ is a set of pairwise adjacent vertices and an \emph{independent set} in $G$ is a set of pairwise nonadjacent vertices. A \emph{split graph} $G$ is a graph whose vertex set can be partitioned into sets $C$ and $I$, such that $C$ is a clique in $G$ and $I$ is an independent set in $G$. We call such a partition a \emph{split partition}. A graph is \emph{bipartite} if its vertex set can be partitioned into two independent sets $X$ and $Y$. A bipartite graph $G$ is called \emph{chordal bipartite} if every induced cycle contained in $G$ has a length of four. Note that there is a strong relationship between split graphs and bipartite graphs. Every bipartite graph is a spanning subgraph of a split graph and every split graph can be made to a bipartite graph by removing the edges between the clique vertices.

A \emph{tree} is an acyclic connected graph. A \emph{spanning tree} of a graph $G$ is an acyclic connected subgraph of $G$ which contains all vertices of $G$. A tree together with a distinguished \emph{root vertex} $r$ is said to be \emph{rooted}. In such a rooted tree $T$, a vertex $v$ is the \emph{parent} of vertex $w$ if $v$ is an element of the unique path from $w$ to the root $r$ and the edge $vw$ is contained in $T$. A vertex $w$ is called the \emph{child} of $ v $ if $ v $ is the parent of $ w $.

A \emph{vertex ordering} of $G$ is a bijection $\sigma: \{1,2,\dots,|V(G)|\}\rightarrow V(G)$. We denote by $\sigma^{-1}(v)$ the position of vertex $v\in V(G)$. Given two vertices $u$ and $v$ in $G$ we say that $u$ is \emph{to the left} (resp. \emph{to the right}) of $v$ if $\sigma^{-1}(u)<\sigma^{-1}(v)$ (resp. $\sigma^{-1}(u)>\sigma^{-1}(v)$) and we denote this by $u \prec_{\sigma}v$ (resp.  $u \succ_{\sigma}v$).

A \emph{partial order} $ \pi $ on a set $X$ is a reflexive, antisymmetric and transitive binary relation on $ X $. We also denote $(x,y) \in \pi$ by $x \prec_\pi y$ if $x \neq y$. A \emph{linear extension} of $\pi$ is a total order $\sigma$ of the elements of $X$ that fulfills all conditions of $\pi$, i.e., if $x \prec_\pi y$, then $x \prec_\sigma y$. We will often use the term ``$\sigma$ extends $\pi$''. For a binary relation $ \pi' $ on $ X $ we say that the \emph{reflexive and transitive closure} of $ \pi' $ is the smallest binary relation $ \pi \supseteq \pi' $ that is reflexive and transitive.

\paragraph*{Graph Searches.}
A \emph{graph search} is an algorithm that, given a graph $G$ as input, outputs a vertex ordering of $G$. All graph searches considered in this paper can be formalized adapting a framework introduced by Corneil et al.~\cite{corneil2016tie} (a similar framework is given in~\cite{krueger2011general}). This framework uses subsets of $\N^+$ as vertex labels. Whenever a vertex is visited, its index in the search ordering is added to the labels of its unvisited neighbors. The search $\cal A$ is defined via a strict partial order $\prec_{\cal A}$ on the elements of ${\cal P}(\N^+)$ (see \cref{algo:ls}). For a given graph search $\cal A$ we say that a vertex ordering $\sigma$ of a graph $G$ is an \emph{$\cal A$-ordering} of $G$ if $\sigma$ can be the output of $\cal A$ with input $G$.

\newcommand{\searchlabel}{\emph{label}}
\begin{algorithm2e}[h]
\small
    \KwIn{A graph $G$} 
    \KwOut{A search ordering $\sigma$ of $G$}
    \Begin{
		\lForEach{$v \in V(G)$}{\searchlabel($v$) $\leftarrow$ $\emptyset$}
		\For{$i$ $\leftarrow$ $1$ \KwTo $|V(G)|$}{
			\emph{Eligible} $\leftarrow$ $\{x \in V(G)~|~x$ unnumbered and $\nexists$ unnumbered $y \in V(G)$ \\ \mbox{}\phantom{\emph{Eligible} $\leftarrow$ $\{x \in V(G)~|~$}such that \searchlabel($x$) $\prec_{\cal A}$ \searchlabel($y$)$\}$\;
			let $v$ be an arbitrary vertex in \emph{Eligible}\;\label{line:ls}
			$\sigma(i)$ $\leftarrow$ $v$\tcc*{assigns to $v$ the number $i$}
			\lForEach{unnumbered vertex $w \in N(v)$}{\searchlabel($w$) $\leftarrow$ \searchlabel($w$) $\cup$ $\{i\}$}
		}
	}
    \caption{Label Search($\prec_{\cal A}$)}\label{algo:ls}
\end{algorithm2e}

In the following, we define the searches considered in this paper by presenting suitable partial orders $\prec_{\cal A}$ (see~\cite{corneil2016tie}). The \emph{Generic Search} (GS) is equal to the Label Search($\prec_{GS}$) where $A \prec_{GS} B$ if and only if $A = \emptyset$ and $B \neq \emptyset$. Thus, any vertex with a numbered neighbor can be numbered next.

The partial label order $\prec_{BFS}$ for \emph{Breadth First Search} (BFS) is defined as follows: $A \prec_{BFS} B$ if and only if $A = \emptyset$ and $B \neq \emptyset$ or $\min(A) > \min (B)$. For the \emph{Lexicographic Breadth First Search} (LBFS)~\cite{rose1976algorithmic} we consider the partial order $\prec_{LBFS}$ with $A \prec_{LBFS} B$ if and only if $A \subsetneq B$ or $\min(A \setminus B) > \min(B \setminus A)$. 

The \emph{Maximum Cardinality Search} (MCS)~\cite{tarjan1984simple} uses the partial order $\prec_{MCS}$ with $A \prec_{MCS} B$ if and only if $|A| < |B|$. The \emph{Maximal Neighborhood Search} (MNS)~\cite{corneil2008unified} is defined using $\prec_{MNS}$ with $A \prec_{MNS} B$ if and only if $A \subsetneq B$. If $A \prec_{MNS} B$, then it also holds that $A \prec_{LBFS} B$ and $A \prec_{MCS} B$. Thus, any ordering produced by LBFS or MCS is also an MNS ordering.

Search orderings of these searches can be characterized using so-called \emph{4-point conditions} (see~\cite{corneil2008unified}). The condition for LBFS is given in the following lemma. We will use the condition of LBFS given in the following lemma several times throughout the paper.

\begin{lemma}[\cite{corneil2008unified}]\label{lemma:4point-lbfs}
A vertex ordering $\sigma$ of a graph $G$ is an LBFS ordering if and only if the following property holds: For all vertices $a,b,c \in V(G)$ with $a \prec_\sigma b \prec_\sigma c$, $ac \in E(G)$ and $ab \notin E(G)$ there is a vertex $d \in V(G)$ with $d \prec_\sigma a$, $bd \in E(G)$  and $cd \notin E(G)$.
\end{lemma}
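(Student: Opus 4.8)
The plan is to prove the characterization of LBFS orderings via the stated 4-point condition by establishing both directions separately. For the forward direction, I would assume $\sigma$ is a genuine LBFS ordering produced by the label search with $\prec_{LBFS}$, and derive the 4-point property. Take vertices $a \prec_\sigma b \prec_\sigma c$ with $ac \in E(G)$ and $ab \notin E(G)$. The key observation is that at the moment just before $b$ was numbered, $b$ was chosen over $c$ (since $b \prec_\sigma c$), so $b$ was eligible, meaning it was not the case that $\mathit{label}(b) \prec_{LBFS} \mathit{label}(c)$ at that point. Now, since $a \prec_\sigma b$ and $ac \in E(G)$ but $ab \notin E(G)$, the index $\sigma^{-1}(a)$ lies in $\mathit{label}(c)$ but not in $\mathit{label}(b)$ at the time $b$ is chosen. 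By definition of $\prec_{LBFS}$, the failure of $\mathit{label}(b) \prec_{LBFS} \mathit{label}(c)$ together with the presence of this distinguishing element in $\mathit{label}(c) \setminus \mathit{label}(b)$ forces the existence of a smaller index in $\mathit{label}(b) \setminus \mathit{label}(c)$; that smaller index corresponds to a vertex $d$ numbered before $a$, adjacent to $b$ but not to $c$, which is exactly what we need.

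For the backward direction, I would assume $\sigma$ satisfies the 4-point condition and show it can be produced by the label search. The cleanest route is to argue by contradiction: suppose $\sigma$ is \emph{not} an LBFS ordering. Then there is a first position $i$ at which $\sigma$ deviates from any valid LBFS choice, i.e., the vertex $b := \sigma(i)$ is not eligible at step $i$, so there is an unnumbered vertex $c$ (hence $b \prec_\sigma c$) with $\mathit{label}(b) \prec_{LBFS} \mathit{label}(c)$ relative to the numbering of the first $i-1$ vertices. Unpacking $\prec_{LBFS}$, either $\mathit{label}(b) \subsetneq \mathit{label}(c)$ or $\min(\mathit{label}(b) \setminus \mathit{label}(c)) > \min(\mathit{label}(c) \setminus \mathit{label}(b))$; in both cases there is a smallest index $j$ witnessing that $c$ beats $b$, and this $j$ corresponds to a vertex $a := \sigma(j)$ with $a \prec_\sigma b \prec_\sigma c$, $ac \in E(G)$, $ab \notin E(G)$, and crucially $a$ is chosen so that no neighbor of $b$ with smaller index fails to be a neighbor of $c$. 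Applying the 4-point condition yields a vertex $d \prec_\sigma a$ adjacent to $b$ but not $c$, contradicting the minimality in the choice of $a$ (the distinguishing index of $b$ over $c$ should have been smaller).

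The main obstacle I anticipate is the bookkeeping in the backward direction: correctly identifying the distinguishing index between $\mathit{label}(b)$ and $\mathit{label}(c)$ and arguing that the vertex $d$ supplied by the 4-point condition actually contradicts the assumed inequality $\mathit{label}(b) \prec_{LBFS} \mathit{label}(c)$. One must be careful that $d$, being adjacent to $b$ but not $c$ and lying to the left of $a$, contributes an index to $\mathit{label}(b) \setminus \mathit{label}(c)$ that is strictly smaller than the index contributed by $a$ to $\mathit{label}(c) \setminus \mathit{label}(b)$, thereby reversing the lexicographic comparison. Since this is a known characterization from~\cite{corneil2008unified}, I would likely cite it rather than reproduce the full argument, but the sketch above indicates how the two directions interlock through the definition of $\prec_{LBFS}$ and the eligibility rule of the label search.
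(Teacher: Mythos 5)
The paper gives no proof of this lemma at all---it is imported verbatim from Corneil and Krueger~\cite{corneil2008unified}, which is also what you say you would do---so there is no in-paper argument to compare against, and your sketch is a correct reconstruction of the standard proof from that reference. Both directions check out: in the forward direction, eligibility of $b$ at its numbering step together with $\sigma^{-1}(a)\in \mathit{label}(c)\setminus \mathit{label}(b)$ rules out $\mathit{label}(b)\subsetneq \mathit{label}(c)$ and forces $\min(\mathit{label}(b)\setminus \mathit{label}(c))<\min(\mathit{label}(c)\setminus \mathit{label}(b))\le\sigma^{-1}(a)$, yielding the required $d\prec_\sigma a$; in the converse, taking the first ineligible step (valid because the label search is memoryless, so any consistent prefix is extendable) and setting $a=\sigma(j)$ with $j=\min(\mathit{label}(c)\setminus \mathit{label}(b))$ handles both branches of $\prec_{LBFS}$ uniformly, and the $d$ supplied by the 4-point condition contradicts the assumed relation $\mathit{label}(b)\prec_{LBFS}\mathit{label}(c)$ itself (rather than, as you loosely phrase it mid-proof, the ``minimality in the choice of $a$''---your final paragraph states the correct contradiction).
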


In the search algorithms following the framework given in \cref{algo:ls}, any of the vertices in the set \emph{Eligible} can be chosen as the next vertex. Some applications use special variants of these searches that involve tie-breaking. For any instantiation $\cal A$ of \cref{algo:ls}, we define the graph search ${\cal A}^+$ as follows: Add a vertex ordering $\rho$ of graph $G$ as additional input and replace line~\ref{line:ls} in \cref{algo:ls} with ``let $v$ be the vertex in \emph{Eligible} that is leftmost in $\rho$''. Note that this corresponds to the algorithm TBLS given in~\cite{corneil2016tie}. The search ordering ${\cal A}^+(\rho)$ is unique since there are no ties to break.

The following lemma gives a strong property for all $\A^+$-searches considered here and will be used several times.

\begin{lemma}\label{lemma:plus}
Let $G$ be a graph and $\rho$ be a vertex ordering of $G$. Let ${\cal A}$ be a graph search in $\{$GS, BFS, LBFS, MCS, MNS$\}$ and $\sigma$ the ${\cal A}^+(\rho)$ ordering of $G$. If $u \prec_\sigma v$ and $v \prec_\rho u$, then there is a vertex $x$ with $x \prec_\sigma u$ and $xu \in E(G)$ and $xv \notin E(G)$.  
\end{lemma}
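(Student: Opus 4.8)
The plan is to analyze the single moment $i := \sigma^{-1}(u)$ at which $\A^+(\rho)$ numbers $u$. At this point $v$ is still unnumbered because $u \prec_\sigma v$. Writing $\lambda(w)$ for the label of an unnumbered vertex $w$ at step $i$, one has $\lambda(w)=\{\,j<i : \sigma(j)\in N(w)\,\}$, since each earlier vertex $\sigma(j)$ contributed its index $j$ to the labels of its unnumbered neighbours. The key reduction is this: if I can show $\lambda(u)\not\subseteq\lambda(v)$, then any index $j\in\lambda(u)\setminus\lambda(v)$ yields the desired witness $x:=\sigma(j)$, because $j<i$ gives $x\prec_\sigma u$, while $j\in\lambda(u)$ gives $xu\in E(G)$ and $j\notin\lambda(v)$ gives $xv\notin E(G)$. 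So the whole lemma reduces to proving $\lambda(u)\not\subseteq\lambda(v)$.

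To prove $\lambda(u)\not\subseteq\lambda(v)$ I would argue by contradiction, exploiting the tie-breaking rule. Suppose $\lambda(u)\subseteq\lambda(v)$. Since $\sigma=\A^+(\rho)$ always picks the leftmost eligible vertex in $\rho$ and $v\prec_\rho u$, vertex $v$ cannot be eligible at step $i$ (otherwise $v$ would be numbered before $u$). On the other hand $u$ is eligible, so no unnumbered $y$ satisfies $\lambda(u)\prec_{\A}\lambda(y)$. The plan is to transfer this non-domination from $u$ to $v$ and derive the contradiction that $v$ is eligible as well.

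The technical heart is a monotonicity property of the label orders, which I would isolate as a separate claim: for all finite sets $A,B,C\subseteq\N^+$ and every $\A\in\{\text{GS},\text{BFS},\text{LBFS},\text{MCS},\text{MNS}\}$, if $A\subseteq B$ and $B\prec_{\A}C$, then $A\prec_{\A}C$. Granting this, I apply it with $A=\lambda(u)$ and $B=\lambda(v)$: for every unnumbered $y$, a relation $\lambda(v)\prec_{\A}\lambda(y)$ would force $\lambda(u)\prec_{\A}\lambda(y)$, contradicting the eligibility of $u$; hence no such $y$ exists and $v$ is eligible, which is exactly the contradiction sought.

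The main obstacle is verifying the monotonicity claim uniformly, and BFS is the delicate case: its order depends only on $\min(\cdot)$, so shrinking a label to a subset can raise its minimum, and one must check that this cannot turn a dominated label into an undominated one. Concretely, I would split $B\prec_{\A}C$ into the empty-versus-nonempty case and the $\min$-comparison case, and use that $A\subseteq B$ yields $A\setminus C\subseteq B\setminus C$ and $C\setminus B\subseteq C\setminus A$, hence $\min(A\setminus C)\ge\min(B\setminus C)$ and $\min(C\setminus A)\le\min(C\setminus B)$; these inequalities settle the $\min$-based comparison for BFS and the subset-or-lexicographic comparison defining $\prec_{LBFS}$. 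For MNS, MCS and GS the property follows quickly from transitivity of $\subsetneq$, monotonicity of cardinality, and the trivial empty/nonempty bookkeeping, so the only genuine work lies in the $\min$-driven cases.
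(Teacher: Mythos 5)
Your proof is correct and follows essentially the same route as the paper's: both isolate the step at which $u$ is numbered, argue by contradiction that $\searchlabel(u) \subseteq \searchlabel(v)$ would make $v$ ineligible via some dominating label and then transfer that domination to $u$ using the monotonicity property (if $A \subseteq B$ and $B \prec_{\cal A} C$ then $A \prec_{\cal A} C$). The only difference is that you verify this monotonicity claim explicitly for all five searches, including the $\min$-based cases, whereas the paper dismisses it as easily checked.
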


\begin{proof}
Assume for contradiction that for any vertex $x$ with $x \prec_\sigma u$ and $xu \in E(G)$ it also holds $xv \in E(G)$. It can be checked easily that the strict partial order $\prec_{\cal A}$ given for any of the given graph searches ${\cal A}$ has the following property: If $A \subseteq B$, then for any $C$ with $B \prec_{\cal A} C$ it also holds $A \prec_{\cal A} C$.  Consider the step $i$ in the computation of $\sigma$ where $u$ was numbered by the search $\cal A$. Since $v \prec_\rho u$, the vertex $v$ was not an element of \emph{Eligible}. Therefore, there was an unnumbered vertex $x$ with $\searchlabel(v) \prec_{\cal A} \searchlabel(x)$. Due to the assumption on $u$ and $v$, it held that $\searchlabel(u) \subseteq \searchlabel(v)$. Due to the property of $\prec_{\cal A}$ mentioned above, it held that $\searchlabel(u) \prec_{\cal A} \searchlabel(x)$ and $u$ was not an element of \emph{Eligible}; a contradiction.
\end{proof}

	\section{The Partial Search Order Problem}

We start this section by introducing the problem considered in this paper.

\begin{problem1}{Partial Search Order Problem (PSOP) of graph search $\A$}
\begin{description}
\item[\textbf{Instance:}] A graph $G$, a partial order $\pi$ on $V(G)$.
\item[\textbf{Task:}]
Decide whether there is an $\cal A$-ordering of $G$ that extends $\pi$.
 \end{description}
\end{problem1}

We will also consider a special variant, where the start vertex of the search ordering is fixed. We call this problem the \emph{rooted partial search order problem}. Note that the general problem and the rooted problem are polynomial time equivalent. If we have a polynomial time algorithm to solve the rooted problem we can apply it $|V(G)|$ times to solve the general problem. On the other hand, the rooted problem with fixed start vertex $r$ can be solved by a general algorithm. To this end, we add all the tuples $(r,v)$, $v \in V(G)$, to the partial order $\pi$. Note that in the following we always assume that a given start vertex $r$ is a minimal element of the partial order $\pi$ since otherwise we can reject the input immediately.

The \emph{end-vertex problem} of a graph search $\cal A$ introduced by Corneil et al.~\cite{corneil2010end} in 2010 asks whether the vertex $t$ can be the last vertex of an $\cal A$-ordering of a given graph $G$. This question can be encoded by the partial order $\pi := \{(u,v)~|~u,v \in V(G), u = v \text{ or } v=t\}$, leading to the following observation.

\begin{observation}\label{obs:end-vertex}
The end-vertex problem of a graph search $\cal A$ on a graph $G$ can be solved by solving the PSOP of $\cal A$ on $G$ for a partial order of size $\O(|V(G)|)$.
\end{observation}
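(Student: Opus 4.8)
The plan is to exhibit the explicit partial order claimed in the statement and verify two things: that it is a genuine partial order of the stated size, and that its linear extensions are exactly the $\A$-orderings of $G$ ending at $t$. The observation essentially asks us to encode ``$t$ is visited last'' as a partial-order constraint, so the natural candidate is the reflexive relation $\pi := \{(u,v)~|~u,v \in V(G),\ u=v \text{ or } v=t\}$ already written in the paragraph preceding the statement; the whole content of the proof is to confirm this encoding behaves correctly.

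First I would check that $\pi$ is a partial order. Reflexivity is immediate from the clause $u=v$. For antisymmetry, suppose $u \prec_\pi v$ and $v \prec_\pi u$ with $u \neq v$; then $v=t$ and $u=t$, forcing $u=v$, a contradiction, so no such pair exists. For transitivity, if $u \prec_\pi v$ and $v \prec_\pi w$ with all three distinct, then $v=t$ and $w=t$, so $w=t$ and hence $u \prec_\pi w$ holds. Thus $\pi$ is reflexive, antisymmetric, and transitive. Its size is $\O(|V(G)|)$, since apart from the diagonal (which one may regard as implicit) the only nontrivial pairs are $(u,t)$ for $u \in V(G)$, of which there are at most $|V(G)|$.

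Next I would argue the equivalence of the two decision problems. The key reduction is: a vertex ordering $\sigma$ extends $\pi$ if and only if $t$ is the last vertex of $\sigma$. Indeed, $\sigma$ extends $\pi$ exactly when $u \prec_\sigma t$ for every $u \neq t$, which is precisely the statement $\sigma^{-1}(t)=|V(G)|$, i.e.\ that $t$ is visited last. Restricting attention to $\A$-orderings, this shows that there exists an $\A$-ordering of $G$ ending at $t$ if and only if there exists an $\A$-ordering of $G$ that extends $\pi$, which is exactly the PSOP instance $(G,\pi)$. Since $|\pi| = \O(|V(G)|)$, solving this PSOP instance solves the end-vertex problem for the pair $(G,t)$ within the claimed bound.

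There is no real obstacle here; the statement is deliberately a bridging observation rather than a substantive result. The only point requiring a moment's care is the bookkeeping on the size of $\pi$: one should be clear whether the diagonal pairs are counted (they contribute another $|V(G)|$ tuples, which does not affect the $\O(|V(G)|)$ bound) and should note that the transitive closure introduces no additional pairs, so no hidden blowup occurs. With that verified, the observation follows directly from the definition of a linear extension given in the Preliminaries.
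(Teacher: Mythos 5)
Your proof is correct and takes essentially the same route as the paper, which simply exhibits the encoding $\pi = \{(u,v)~|~u,v \in V(G),\ u = v \text{ or } v = t\}$ in the sentence preceding the observation and treats the verification as immediate. Your added checks (that $\pi$ is indeed a partial order, that its size is $\O(|V(G)|)$, and that linear extensions are exactly the orderings ending in $t$) are the routine details the paper leaves implicit.
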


From this observation it follows directly that the partial search order problem is \NP-complete for BFS, DFS, LBFS, LDFS, MCS and MNS~\cite{beisegel2019end-vertex,charbit2014influence,corneil2010end}.

In~\cite{beisegel2019recognizing}, Beisegel et al. introduced the terms \emph{\cf-tree} and \emph{\cl-tree} of a search ordering. For this we only consider search orderings produced by a \emph{connected graph searches}, i.e., a graph search that outputs search orderings of the Generic Search. In the \cf-tree of such an ordering, every vertex different from the start vertex is connected to its leftmost neighbor in the search ordering. In the \cl-tree, any vertex $v$ different from the start vertex is connected to its rightmost neighbor that is to the left of $v$ in the search ordering. The problem of deciding whether a given spanning tree of a graph can be the \cf-tree (\cl-tree) of a search ordering of a given type is called \emph{\cf-tree (\cl-tree) recognition problem}. If the start vertex is fixed, it is called the \emph{rooted \cf-tree (\cl-tree) recognition problem}. The rooted \cf-tree recognition problem is a special case of the (rooted) PSOP, as the following proposition shows. 

\begin{proposition}\label{prop:f-trees}
Let $\cal A$ be a connected graph search. Given a graph $G$ and a spanning tree $T$ of $G$ rooted in $r$, we define $\pi$ to be the reflexive, transitive closure of the relation $\R := \{(x,y)~|~x$ is parent of $y$ in $T$ or there is child $z$ of $x$ in $T$ with $yz \in E(G)\}$. The tree $T$ is the \cf-tree of an $\cal A$-ordering $\sigma$ of $G$ if and only if $\pi$ is a partial order and $\sigma$ extends $\pi$. 

Therefore, the rooted \cf-tree problem of a graph search $\cal A$ on a graph $G$ can be solved by solving the (rooted) PSOP of $\cal A$ on $G$.
\end{proposition}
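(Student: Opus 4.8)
The plan is to prove the biconditional directly from the definition of the $\cf$-tree and then read off the reduction. The guiding idea is that the two clauses of $\R$ encode exactly the constraints that force a vertex's parent in $T$ to be its leftmost neighbour in any extending ordering: the first clause forces every parent to precede its child, while the second clause forces, for each child $z$ of $x$, \emph{every} neighbour of $z$ to lie after $x$ — which is precisely what it means for $x$ to remain the leftmost neighbour of $z$. As a preliminary observation I would note that $\pi$ always forces $r$ to be first in any extending ordering: following the $T$-path $r=u_0,u_1,\dots,u_k=v$ from $r$ to an arbitrary $v$, each pair $(u_i,u_{i+1})$ lies in $\R$ by the parent clause, so $r\prec_\pi v$ by transitivity. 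Hence the start vertex of an extending $\A$-ordering $\sigma$ is $r$, matching the root of $T$.

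For the forward direction I would assume $T$ is the $\cf$-tree of $\sigma$ and verify that every pair $(x,y)\in\R$ with $x\neq y$ satisfies $x\prec_\sigma y$. If $x$ is the parent of $y$, then $x$ is the leftmost neighbour of $y$; since $\A$ is connected, $y$ has some neighbour to its left, so its leftmost neighbour $x$ also lies to its left, giving $x\prec_\sigma y$. If instead there is a child $z$ of $x$ with $yz\in E(G)$, then $x$ is the leftmost neighbour of $z$ and $y$ is a neighbour of $z$, so $x\prec_\sigma y$ or $x=y$, hence $x\prec_\sigma y$ as $x\neq y$. Thus $\sigma$ respects every pair of $\R$; being a linear order, it therefore extends the reflexive and transitive closure $\pi$, and the absence of a $\prec_\sigma$-cycle shows $\pi$ is antisymmetric, so $\pi$ is a partial order.

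For the backward direction I would assume $\pi$ is a partial order and $\sigma$ is an $\A$-ordering extending $\pi$, and show that the parent of each $y\neq r$ in $T$ coincides with its leftmost neighbour in $\sigma$. Let $x$ be the parent of $y$. The parent clause gives $(x,y)\in\R$, so $x\prec_\sigma y$, and $x$ is a neighbour of $y$ because $T$ is a subgraph of $G$. For any other neighbour $w$ of $y$, the second clause applies with $z=y$ (a child of $x$ adjacent to $w$), so $(x,w)\in\R$ and $x\prec_\sigma w$. Hence $x$ is the leftmost neighbour of $y$, i.e.\ the parent of $y$ in the $\cf$-tree of $\sigma$; since this holds for every $y\neq r$ and the roots agree by the preliminary observation, the $\cf$-tree of $\sigma$ equals $T$.

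The reduction then follows: given $(G,T,r)$ one computes $\R$ and its reflexive transitive closure $\pi$ in polynomial time, rejects if $\pi$ is not antisymmetric, and otherwise runs the rooted PSOP on $(G,\pi)$; by the equivalence this correctly decides the rooted $\cf$-tree problem. I do not anticipate a serious obstacle, as the whole argument is a careful unpacking of definitions. The one point genuinely demanding attention is confirming that the second clause of $\R$ is simultaneously \emph{necessary} (forward direction) and \emph{sufficient} (backward direction) to capture the leftmost-neighbour condition, together with the check that the root is forced to be first.
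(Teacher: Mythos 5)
Your proof is correct and follows essentially the same approach as the paper's: both directions are a direct unpacking of the \cf-tree definition against the two clauses of $\R$ (forward: the leftmost-neighbour property of parents makes $\sigma$ respect $\R$ and hence $\pi$; backward: the parent clause plus the child clause with $z=y$ force the parent to be the leftmost neighbour). Your two extra steps --- showing $r$ is forced first via the $T$-path and deriving antisymmetry of $\pi$ from the existence of the linear order $\sigma$ --- merely make explicit details the paper's terser proof leaves implicit, and both are handled correctly.
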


\begin{proof}
Assume $T$ is the \cf-tree of the $\cal A$-ordering $\sigma$ starting in $r$. Then for any vertex $y \in V(G)$ with parent $x \in V(G)$ it holds that $x \prec_\sigma y$. Furthermore, for any neighbor $z$ of $y$ different from $x$ it holds that $x \prec_\sigma z$ since otherwise the edge $xy$ would not be an element of $T$. Therefore, $\sigma$ must be a linear extension of $\R$ and, thus, of $\pi$. 

On the other hand, let the $\cal A$-ordering $\sigma$ be a linear extension of $\pi$. Let $y$ be an arbitrary vertex in $G$ and let $x$ be the parent of $y$ in $T$. Following from the construction of $\R$ and $\pi$, the vertex $x$ is the leftmost neighbor of $y$ in $\sigma$. Therefore, the edge $xy$ is part of the \cf-tree of $\sigma$. Hence, this \cf-tree must be equal to $T$.
\end{proof}

Note that the general \cf-tree recognition problem without fixed start vertex can be solved by deciding the partial search order problem for any possible root. The \cl-tree recognition problem, however, is not a special case of the partial search order problem. For a vertex $w$, its parent $v$ and another neighbor $z$ of $w$, it must either hold that $v \prec_\sigma w \prec_\sigma z$ or that $z \prec_\sigma v \prec_\sigma w$. These constraints cannot be encoded using a partial order. Nevertheless, we will see in Section~\ref{sec:chordal-bipartite} that on bipartite graphs the PSOP of (L)BFS is a generalization of the \cl-tree recognition problem of (L)BFS.

We conclude this section with a simple algorithm for the rooted PSOP of Generic Search (see \cref{algo:generic} for the pseudocode). First the algorithm visits the given start vertex $r$. Afterwards it looks for a vertex with an already visited neighbor among all vertices that are minimal in the remaining partial order. If no such vertex exists, then it rejects. Otherwise, it visits one of these vertices next.

\begin{algorithm2e}[t]
\small
    \KwIn{Connected graph $G$, a vertex $r \in V(G)$, a partial order $\pi$ on $V(G)$} 
    \KwOut{GS ordering $\sigma$ of $G$ extending $\pi$ or ``$\pi$ cannot be linearized''}
    \Begin{
		$S$ $\leftarrow$ $\{r\}$; \qquad $i \leftarrow 1$\;
		\While{$S \neq \emptyset$}{
			let $v$ be an arbitrary element of $S$\;
			remove $v$ from $S$ and from $\pi$\;
			$\sigma(i)$ $\leftarrow$ $v$; \qquad $i$ $\leftarrow$ $i + 1$\;
			\lForEach{$w \in N(v)$}{mark $w$}
			\lForEach{marked $x \in V(G)$ which is minimal in $\pi$}{
				$S$ $\leftarrow$ $S \cup \{x\}$}
		}
        \lIf{$i = |V(G)|$ + 1}{\Return{$\sigma$}} \lElse{\Return{``$\pi$ cannot be linearized''}}
	}
    \caption{Rooted PSOP of Generic Search}\label{algo:generic}
\end{algorithm2e}

\begin{theorem}\label{thm:algo:generic}
\cref{algo:generic} solves the rooted partial search order problem of Generic Search for a graph $G$ and a partial order $\pi$ in time $\O(|V(G)|+|E(G)| + |\pi|)$.
\end{theorem}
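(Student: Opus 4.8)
The plan is to prove two things: correctness (the algorithm outputs a valid GS-ordering extending $\pi$ exactly when one exists, and otherwise correctly reports infeasibility) and the running time bound. The correctness argument splits naturally into soundness (any output $\sigma$ is a GS-ordering extending $\pi$) and completeness (if some GS-ordering extending $\pi$ exists, the algorithm finds one rather than rejecting).

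Let me think about each piece.

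**Soundness.** When the algorithm outputs $\sigma$ (i.e., $i = |V(G)|+1$), every vertex was numbered. I need to check: (a) $\sigma$ is a GS-ordering, and (b) $\sigma$ extends $\pi$.

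For (a): The only way a vertex $v$ enters $S$ (other than $r$) is via the last foreach loop, which requires $v$ to be *marked* — and a vertex is marked only when some neighbor was previously numbered. So when $v$ is chosen and numbered, it has an already-numbered neighbor. That's exactly the GS eligibility condition ($A \ne \emptyset$). Wait — but I need the root $r$ to be handled: $r$ is numbered first with empty label, which is fine for GS (the first vertex always has empty label in any search). So every vertex after the first has a numbered neighbor. That's precisely GS.

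For (b): I need $v$ removed from $\pi$ only when all elements below it in $\pi$ are already numbered. The algorithm only adds $x$ to $S$ when $x$ is *minimal in $\pi$* (the current, reduced $\pi$ after removals). And removing $v$ from $\pi$ makes its successors potentially minimal. So when $x$ is added to $S$ and later numbered, everything $\prec_\pi x$ has already been removed, hence numbered earlier. So $\sigma$ extends $\pi$. I should note $r$ is minimal in $\pi$ by the standing assumption.

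**Completeness.** This is the more subtle direction — the main obstacle. Suppose a GS-ordering $\tau$ extending $\pi$ exists, starting at $r$. I must show the algorithm doesn't reject, i.e., it numbers all vertices. Suppose for contradiction it halts with $S = \emptyset$ but some vertices unnumbered.

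The greedy nature means: at each step, $S$ consists of vertices that are both minimal in the current $\pi$ AND have a numbered neighbor. If $S$ empties prematurely, then no minimal-in-$\pi$ vertex has a numbered neighbor (among unnumbered vertices). I want to derive a contradiction with the existence of $\tau$.

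Here's the key structural claim I'd aim for: the set of numbered vertices at any point is "downward closed enough" that the algorithm could always match what $\tau$ does. Actually, the cleaner approach: I'd argue by an exchange/prefix argument that **the set of vertices numbered by the algorithm is always a superset-compatible prefix**. More concretely: let $U$ be the set of vertices numbered by the algorithm at termination, and suppose $U \ne V(G)$. Let $w$ be the $\tau$-earliest vertex not in $U$. Since $\tau$ extends $\pi$, all $\pi$-predecessors of $w$ precede $w$ in $\tau$, so they are in $U$; thus $w$ is minimal in the $\pi$ restricted to $V \setminus U$. Since $\tau$ is a GS-ordering and $w$ is not the first vertex (because $r \in U$), $w$ has a neighbor earlier in $\tau$, which lies in $U$; hence $w$ is marked. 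So $w$ is marked and minimal in the reduced $\pi$ — meaning the algorithm would have placed $w$ in $S$ and not terminated with $w$ unnumbered. Contradiction. This shows $U = V(G)$.

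I should be careful that "minimal in $\pi$" in the algorithm refers to the reduced partial order after removing numbered vertices, and that the marked-set and $S$-insertion logic indeed captures every such vertex. The invariant I'd maintain formally: at the start of each while-iteration, $S$ equals the set of unnumbered, marked vertices that are minimal in the current $\pi$. Establishing and preserving this invariant across the marking and the final foreach loop is the technical heart.

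**Running time.** This part is routine. Each vertex is numbered once, so the while loop runs $|V(G)|$ times total. Summed over all iterations, the marking foreach touches each edge a constant number of times (when its endpoint is numbered), giving $\O(|E(G)|)$. The checks for minimality and the removal from $\pi$ — maintaining for each vertex a count of its not-yet-removed $\pi$-predecessors, decrementing when a predecessor is removed, and adding a vertex to a candidate list when its count hits zero — cost $\O(|\pi|)$ total, plus $\O(|V(G)|)$ for initialization. I'd remark that to avoid rescanning, one combines the "marked" flag with the "zero remaining predecessors" condition so each vertex enters $S$ at most once. Hence total time $\O(|V(G)| + |E(G)| + |\pi|)$.

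I expect the **main obstacle** to be making the invariant in completeness airtight, specifically handling the interaction between the per-step marking and the minimality test so that no eligible vertex is ever overlooked when $S$ empties — the $\tau$-earliest-missing-vertex argument above is the clean way to sidestep a heavier induction.
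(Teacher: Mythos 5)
Your proposal is correct and takes essentially the same route as the paper's proof: the same soundness argument, the same completeness contradiction (you take the $\tau$-earliest unnumbered vertex and show it is marked and minimal in the reduced $\pi$, hence would enter $S$; the paper takes the leftmost vertex of $\sigma$ not inserted into $S$ and derives an unnumbered $\pi$-predecessor further right, which is the mirror image of the same exchange argument), and the same linear-time implementation, since your predecessor-count bookkeeping is exactly the paper's in-degree maintenance on the digraph $D_\pi$ encoding $\pi$. No gaps.
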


\begin{proof}
In \cref{algo:generic} every vertex $x$ different from $r$ is not inserted into $S$ before a neighbor of $x$ was numbered. As only vertices contained in $S$ are visited, we know that every ordering $\sigma$ returned by  \cref{algo:generic} is a Generic Search ordering of $G$. Furthermore, $\sigma$ is a linear extension of $\pi$ since a vertex $v$ is only added to $\sigma$ if for any $(x,v) \in \pi$ it holds that $x$ is already numbered in $\sigma$.

If \cref{algo:generic} returns ``$\pi$ cannot be linearized'', then there is at least one vertex in $V(G)$ that was not inserted into $S$. Assume for contradiction that there is a Generic Search ordering $\sigma$ of $G$ starting in $r$ that is a linear extension of $\pi$ and let $v$ be the leftmost vertex in $\sigma$ that was not inserted into $S$. Since $\sigma$ is a GS ordering, there is a neighbor $w$ of $v$ which is to the left of $v$ in $\sigma$. Vertex $w$ has been added to $S$ and, therefore, $v$ was marked by \cref{algo:generic}. As $v$ was not inserted into $S$, there must be a vertex $x$ that was also not inserted into $S$ with $x \prec_\pi v$. However, due to the choice of $v$, vertex $x$ must be to the right of $v$ in $\sigma$. This is contradiction since $\sigma$ was chosen to be a linear extension of $\pi$.

It remains to show that \cref{algo:generic} can be implemented such that it has a linear running time. We encode $\pi$ with a directed graph $D_\pi$, i.e., the edge $(x,y) \in E(D_\pi)$ if and only if $x \prec_\pi y$. Whenever we visit a vertex $v$ we iterate through its neighborhood in $G$ and mark the neighbors. If there is a neighbor $w$ that is minimal in $\pi$, i.e., its in-degree in $D_\pi$ is zero, then we add $w$ to $S$. Afterwards we iterate through the out-neighborhood of $v$ in $D_\pi$. If there is a marked out-neighbor with in-degree one, then we add this vertex to $S$. Afterwards, we delete $v$ from $D_\pi$. Summarizing, the algorithm iterates through every neighborhood in $G$ and through every out-neighborhood in $D_\pi$ at most once. Therefore, the total running time is bounded by $\O(|V(G)| + |E(G)| + |\pi|)$. 
\end{proof}

	\section{One-Before-All Orderings}

Before we present algorithms for the PSOP we introduce a new ordering problem that will be used in the following two sections to solve the PSOP of LBFS on both chordal bipartite graphs and split graphs.

\begin{problem1}{One-Before-All Problem (OBAP)}
\begin{description}
\item[\textbf{Instance:}] A set $M$, a set $\mathcal{Q} \subseteq \mathcal{P}(M)$, a relation $\mathcal{R} \subseteq \mathcal{Q} \times \mathcal{Q}$
\item[\textbf{Task:}]
Decide whether there is a linear ordering $\sigma$ of $M$ fulfilling the \emph{One-Before-All property}, i.e., for all $A,B \in \mathcal{Q}$  with $(A,B) \in \mathcal{R}$ and $B \neq \emptyset$ there is an $x \in A$ such that for all $y \in B$ it holds that $x \prec_\sigma y$.
 \end{description}
\end{problem1}

Note that every partial order $\pi$ on a set $X$ can be encoded as an OBAP instance by setting $M = X$, $\Q = \{\{x\}~|~x \in X\}$ and $\R = \{(\{x\},~\{y\})~|~x \prec_\pi y\}$. Thus, the OBAP generalizes the problem of finding a linear extension of an partial order.

\begin{algorithm2e}[h]
\small
    \KwIn{A set $M$, a set $\mathcal{Q} \subseteq \mathcal{P}(M)$, a relation $\mathcal{R} \subseteq \mathcal{Q} \times \mathcal{Q}$.} 
    \KwOut{An OBA-ordering $\sigma$ of the elements in $M$ or ``No ordering".}
    \Begin{    
            $r(A) \leftarrow 0 \quad \forall A \in \mathcal{Q}$; \qquad $t(x) \leftarrow 0 \quad \forall x \in M$; \qquad $S \leftarrow \emptyset$; \qquad $i \leftarrow 1$\;
            
            \lForEach{$(A,B) \in \mathcal{R}$}{$r(B) \leftarrow r(B) + 1$}
            
            \ForEach{$A \in \mathcal{Q}$ with $r(A) > 0$}{
                \lForEach{$x \in A$}{$t(x) \leftarrow t(x) + 1$}
            }
            
            \lForEach{$x \in M$ with $t(x) = 0$}{$S \leftarrow S \cup \{x\}$}
            
            \While{$S \neq \emptyset$}{
                let $x$ be an element in $S$\;
                $S \leftarrow S \setminus \{x\}$; \qquad
                $\sigma(i) \leftarrow x$; \qquad
                $i \leftarrow i + 1$\;
                \ForEach{$A \in \mathcal{Q}$ with $x \in A$}{\label{line:obaop-for1}
                    $\cal Q$ $\leftarrow$ ${\cal Q} \setminus \{A\}$\;
                    \ForEach{$(A,B) \in \mathcal{R}$}{\label{line:obaop-for2}
                        $\mathcal{R} \leftarrow \mathcal{R} \setminus \{(A,B)\}$\;
                        $r(B) \leftarrow r(B) - 1$\;
                        \If{$r(B) = 0$}{
                            \ForEach{$y \in B$} {\label{line:obaop-for3}
                                $t(y) \leftarrow t(y) - 1$\;
                                \lIf{$t(y) = 0$}{$S \leftarrow S \cup \{y\}$}
                            }
                        }
                        
                    }
                }
            }
            
            \lIf{$i = |M| + 1$}{\Return{$\sigma$}}
            \lElse{\Return{``No ordering"}}
            
        }
    \caption{OBAP}\label{algo:obaop}
\end{algorithm2e}

In the following we describe how we can solve the one-before-all problem in time linear in the input size $|M| + |\mathcal{R}| + \sum_{A \in \mathcal{Q}} |A|$ (see \cref{algo:obaop} for the pseudocode). For every set $A \in {\cal Q}$ we introduce a counter $r(A)$ containing the number of tuples $(X,A) \in {\cal R}$. For every element $x \in M$ the variable $t(x)$ counts the number of sets $A \in {\cal Q}$ with $x \in A$ and $r(A) > 0$. Our algorithm builds the ordering $\sigma$ from left to right. It is not difficult to see that an element $x$ can be chosen next if and only if $t(x) = 0$. As long as such an element exists, the algorithm chooses one, deletes all tuples $(A,B)$ with $x \in A$ from $\R$ and updates the $r$- and the $t$-values. If no such element exists, then the algorithm returns ``No ordering".

\begin{theorem}\label{thm:algo-obaop}
Given a set $M$, a set $\mathcal{Q} \subseteq \mathcal{P}(M)$ and a relation $\mathcal{R} \subseteq \mathcal{Q} \times \mathcal{Q}$, \cref{algo:obaop} returns a linear ordering $\sigma$ of $M$ fulfilling the one-before-all property if and only if such an ordering exists. The running time of the algorithm is $\O(|M| + |\mathcal{R}| + \sum_{A \in \mathcal{Q}} |A|)$.
\end{theorem}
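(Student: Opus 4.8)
The plan is to split the correctness statement into soundness (every returned $\sigma$ is an OBA-ordering) and completeness (if an OBA-ordering exists, the algorithm returns one), and then to bound the running time by a standard incidence-counting argument. Throughout I would rely on the loop invariant that $t(y)$ always equals the number of sets $A \in \Q$ with $y \in A$ and $r(A) > 0$: this holds after initialization, and in the while loop $t(y)$ is decremented for $y \in A$ exactly when $r(A)$ first drops to $0$, so the equality is preserved. In particular $t(y)$ is monotonically non-increasing.

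For soundness, the key consequence of this invariant is that an element $y$ of a set $B \in \Q$ is never placed while $r(B) > 0$, since then $B$ contributes to $t(y)$, forcing $t(y) > 0$ and $y \notin S$. Now fix a pair $(A,B) \in \R$ with $B \neq \emptyset$. The pair $(A,B)$ is deleted from $\R$ — and $r(B)$ correspondingly decremented — only when $A$ is removed from $\Q$, which happens exactly when some element $x \in A$ is placed. Since no element of $B$ is placed before $r(B)$ reaches $0$, and $r(B) = 0$ requires every pair into $B$, in particular $(A,B)$, to have been deleted, the element $x \in A$ whose placement deleted $(A,B)$ is placed strictly before every element of $B$. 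Thus $x$ is the required witness, so $\sigma$ satisfies the one-before-all property. The corner case $A = \emptyset$ cannot occur here, since an empty set is never removed from $\Q$ and hence its outgoing pairs are never deleted, so it never produces a returned ordering; and $B = \emptyset$ is vacuous.

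For completeness I would argue the contrapositive, and this is the step I expect to require the most care. Suppose the algorithm halts with a nonempty set $U$ of unplaced elements. Because the $t$-values only decrease and an element enters $S$ precisely when its $t$-value hits $0$, every $y \in U$ satisfies $t(y) > 0$ at termination, i.e.\ lies in some set $P \in \Q$ with $r(P) > 0$. Assume for contradiction that a valid OBA-ordering $\tau$ exists, and let $u$ be the $\tau$-first element of $U$. Choosing $P \ni u$ with $r(P) > 0$, some pair $(A,P) \in \R$ is still present, which means $A$ was never removed from $\Q$ (removal of $A$ deletes all its outgoing pairs at once), so no element of $A$ was ever placed and hence $A \subseteq U$. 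As $u \in P$ we have $P \neq \emptyset$, so the one-before-all property applied to $(A,P)$ yields a witness $x \in A$ with $x \prec_\tau y$ for all $y \in P$, in particular $x \prec_\tau u$. But $x \in A \subseteq U$ and $x \neq u$ (else $y = u$ would give $u \prec_\tau u$), contradicting the choice of $u$ as the $\tau$-first element of $U$; and if $A = \emptyset$ no witness exists, so $\tau$ cannot be valid either. Hence the algorithm places all of $M$ whenever an OBA-ordering exists, which together with soundness gives the equivalence.

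For the running time, the plan is to store for each $x \in M$ the list of sets of $\Q$ containing it, for each $A \in \Q$ the list of pairs $(A,B) \in \R$, and to keep $r$, $t$ as integer arrays and $S$ as a stack. The initialization loops cost $\O(|M| + |\R| + \sum_{A \in \Q}|A|)$; since $\Q$ contains at most one empty set, $|\Q| \le 1 + \sum_{A \in \Q}|A|$ is absorbed into this bound. In the main loop each element is placed once: iterating over the sets containing the placed element touches each element–set incidence at most once (summing to $\sum_{A \in \Q}|A|$), each set $A$ is removed from $\Q$ and has its outgoing pairs processed only once (summing to $|\R|$), and the inner loop over $y \in B$ runs only when $r(B)$ first reaches $0$, hence at most once per set (again $\sum_{A \in \Q}|A|$). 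Summing these contributions yields the claimed bound $\O(|M| + |\R| + \sum_{A \in \Q}|A|)$.
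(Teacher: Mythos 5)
Your proof is correct and takes essentially the same route as the paper's: soundness via the semantics of the $r$- and $t$-counters (a tuple $(A,B)$ leaves $\R$ only when an element of $A$ is placed, and no element of $B$ can enter $S$ while $r(B)>0$), completeness by picking the first unplaced element in the hypothetical OBA-ordering and contradicting the one-before-all property for a surviving tuple $(A,B)$ with $A$ entirely unplaced, and the running time by the same incidence-counting over element--set memberships and per-set tuple lists. Your explicit loop invariant and the treatment of the $A=\emptyset$, $B=\emptyset$ edge cases are slightly more detailed than the paper's presentation, but the argument is the same.
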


\begin{proof}
Assume \cref{algo:obaop} returns the ordering $\sigma$ that is not a feasible OBA ordering. Then there are sets $A,B \in {\cal Q}$ with $(A,B) \in {\cal R}$ such that there is an $x \in B$ that is to the left of any element of $A$ in $\sigma$. Tuples are only deleted from ${\cal R}$ if an element of the left set of the tuple received an index in $\sigma$. Therefore, the tuple $(A,B)$ was still in ${\cal R}$ when $x$ was inserted into $S$. This is a contradiction, because $t(x)$ would have been larger than zero.

Assume that there is an OBA ordering $\sigma'$ but \cref{algo:obaop} returns ``No ordering''. Let $X$ be the subset of $M$ containing all elements that has not been inserted into $\sigma$ during the execution of the algorithm. Let $x$ the leftmost element in $\sigma'$ that is an element of $X$. Since $x$ was not inserted into $S$ by the algorithm, the condition $t(x) > 0$ still holds at the end of \cref{algo:obaop}. This implies that there are sets $A$ and $B$ with $(A,B) \in {\cal R}$ and $A \subseteq X$ and $x \in B$. Thus, all elements of $A$ are to the right of $x$ in $\sigma'$. This is a contradiction to the fact that $\sigma'$ is an OBA ordering.

Let $\gamma = \sum_{A \in \mathcal{Q}} |A|$. To achieve the claimed running time we use the following data structure: Every set $A$ in $\cal Q$ is a linked list with pointers to its elements in $M$. Furthermore, $A$ has a linked list containing pointers to the tuples $(A,B) \in \R$. Every element $x \in M$ has a linked list containing the pointers to its position in the linked list of every set $A$ with $x \in A$. Obviously, all the steps before the \texttt{while}-loop can be done in time $\O(|M| + |\mathcal{R}| + \gamma)$. We iterate at most $|M|$ times through the \texttt{while}-loop. Every set in $\Q$ is visited at most once in the \texttt{foreach}-loop starting in line~\ref{line:obaop-for1}. The same holds for any tuple of $\R$ in the \texttt{foreach}-loop starting in line~\ref{line:obaop-for2}. As the $r$-value of each set $B$ becomes zero at most once, the set $B$ is visited at most once in the \texttt{foreach}-loop starting in line~\ref{line:obaop-for3}. Thus, the running time of the algorithm is bounded by $\O(|M| + |\mathcal{R}| + \gamma)$.
\end{proof}

    \section{Partial LBFS Orders of Chordal Bipartite Graphs}\label{sec:chordal-bipartite}

In~\cite{gorzny2017end-vertices}, Gorzny and Huang showed that the end-vertex problem of LBFS is \NP-complete on bipartite graphs but can be solved in polynomial time on AT-free bipartite graphs. In this section we will generalize the latter result in two ways by presenting a polynomial-time algorithm for the partial search order problem on chordal bipartite graphs, a superset of AT-free bipartite graphs. 

The following result will be a key ingredient of our approach. It shows that for two vertices $x$ and $y$ in the same layer $N_i(r)$ of a BFS starting in $r$ that have a common neighbor in the succeeding layer $N^{i+1}(r)$, it holds that the neighborhoods of $x$ and $y$ in the preceding layer $N^{i-1}(r)$ are comparable.

\begin{lemma}\label{lemma:neighborhoods}
Let $G$ be a connected chordal bipartite graph and let $r$ be a vertex of $G$. Let $x$ and $y$ be two vertices in $N^i(r)$. If there is a vertex $z \in N^{i+1}(r)$ which is adjacent to both $x$ and $y$, then $N(x) \cap N^{i-1}(r) \subseteq N(y)$ or $N(y) \cap N^{i-1}(r) \subseteq N(x)$.
\end{lemma}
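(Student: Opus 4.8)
The plan is to argue by contradiction and produce a long induced cycle, which is forbidden in a chordal bipartite graph. Assume the conclusion fails. Then neither neighborhood contains the other, so I can fix witnesses $a \in (N(x) \cap N^{i-1}(r)) \setminus N(y)$ and $b \in (N(y) \cap N^{i-1}(r)) \setminus N(x)$. Note that $i \geq 2$ is forced here, since for $i \le 1$ the layer $N^{i-1}(r)$ is empty or equals $\{r\}$ and the two neighborhoods are trivially comparable. The first step is to check that $a, x, z, y, b$ form an induced $P_5$ (in this order). The edges $ax$, $xz$, $zy$, $yb$ are present by the choice of $a,b$ and the assumption on $z$. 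For the non-edges I would use two facts about the BFS layering of a bipartite graph: vertices lying in the same layer are in the same part of the bipartition and hence nonadjacent (this rules out $xy$ and $ab$), and two vertices whose layer indices differ by at least two are nonadjacent (this rules out $az$ and $zb$). The remaining non-edges $ay$ and $xb$ hold by the choice of the witnesses, and $a \neq b$ follows since $a \notin N(y)$.

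The heart of the argument is to close this $P_5$ into an induced cycle by connecting $a$ and $b$ through the lower layers. Here the key idea, and the step that defeats the main obstacle, is to route the connecting path strictly below layer $i-1$: let $H := G[N^0(r) \cup \dots \cup N^{i-2}(r) \cup \{a,b\}]$ and let $P$ be a shortest $a$--$b$ path in $H$. Such a path exists because the ball of radius $i-2$ around $r$ is connected and both $a$ and $b$ have a neighbor in $N^{i-2}(r)$, so they lie in one component of $H$; moreover, every internal vertex of $P$ lies in a layer of index at most $i-2$. Since $a \neq b$ and $ab \notin E(G)$, the path $P$ has length at least two. Concatenating $P$ with the path $a, x, z, y, b$ yields a cycle $W$ on the distinct vertices $a, x, z, y, b$ together with the interior of $P$, of length at least six.

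The main obstacle is to guarantee that $W$ has no chord, and this is exactly what the low-level routing buys me. I would rule out chords by cases according to their endpoints. No chord can have both endpoints in $\{a,x,z,y,b\}$, since that block is the induced $P_5$, and none can have both endpoints on $P$, since $P$ is a shortest, hence induced, path in the induced subgraph $H$. A chord between the interior of the $P_5$ and the interior of $P$ would join a vertex of layer $i$ or $i+1$ (namely $x$, $y$, or $z$) to a vertex of layer at most $i-2$, and these indices differ by at least two, so no such edge exists; a chord from $a$ or $b$ to an interior vertex of $P$ would be a chord of $P$, which is excluded. Hence $W$ is an induced cycle of length at least six, contradicting the fact that every induced cycle of a chordal bipartite graph has length four. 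I expect the only delicate points to be the bookkeeping that all listed vertices are genuinely distinct (which follows from their layer indices together with $a \neq b$) and the verification that $H$ really connects $a$ and $b$; both become routine once the two layering facts above are isolated.
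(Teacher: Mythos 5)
Your proof is correct, and it follows the paper's global strategy — derive a contradiction by assembling the witnesses $a\in (N(x)\cap N^{i-1}(r))\setminus N(y)$ and $b\in (N(y)\cap N^{i-1}(r))\setminus N(x)$ together with $x,z,y$ into an induced cycle of length at least six — but it realizes the key step, the chordless connection of $a$ and $b$ through the lower layers, by a genuinely different device. The paper takes shortest paths $p_u$, $p_v$ from the two witnesses to the root chosen to share the \emph{maximal} number of edges, truncates them at their first common vertex $w$, and argues inducedness of the resulting cycle from this maximality condition combined with the fact that edges only join consecutive layers; verifying that no chord runs between the two descending subpaths is exactly where the maximality choice is needed, and the paper treats this somewhat tersely. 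You instead take a single shortest $a$--$b$ path $P$ inside the induced subgraph $H = G[N^0(r)\cup\dots\cup N^{i-2}(r)\cup\{a,b\}]$, so chordlessness of the connecting segment is automatic (a shortest path in an induced subgraph is induced in $G$), and the only remaining chords are killed by the two layering facts you isolate: same-layer vertices of a bipartite graph are nonadjacent, and vertices whose layers differ by at least two are nonadjacent. Your route trades the paper's monotone-in-layers path (your $P$ may wander within the ball of radius $i-2$, which is harmless) for a cleaner inducedness argument that sidesteps the maximal-overlap bookkeeping; you also make explicit the degenerate cases the paper leaves implicit ($i\le 1$, where the witnesses cannot exist, and $a\neq b$), and your chord case analysis is exhaustive. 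Both proofs are sound; yours is arguably easier to verify line by line.
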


\begin{proof}
Assume $x$ has a neighbor $u$ in $N^{i-1}(r)$ which is not a neighbor of $y$ and $y$ has a neighbor $v$ in $N^{i-1}(r)$ which is not a neighbor of $x$. Let $p_u$ and $p_v$ be shortest paths from $u$ to $r$ and $v$ to $r$, respectively, which shares the maximal number of edges. Let $w$ be the first vertex (starting from $u$ and $v$, respectively) that is both in $p_u$ and $p_v$. Let $p^*_u$ be the subpath of $p_u$ between $u$ and $w$ and let $p^*_v$ be the subpath of $p_v$ between $v$ and $w$. The paths $p^*_u$ and $p^*_v$ and the edges $xu$, $yv$, $xz$ and $yz$ form a cycle $C$ of length at least six in $G$. Due to the choice of $p_u$ and $p_v$ and the fact that two vertices of $G$ can only be adjacent if they lie in consecutive layers, the cycle $C$ is an induced cycle. This is a contradiction since $G$ is chordal bipartite.
\end{proof}

\Cref{algo:chordal-bipartite} presents the pseudocode of an algorithm for the rooted PSOP of LBFS on chordal bipartite graphs. We assume that the partial order $\pi$ contains only tuples where both elements are in the same layer of a BFS starting in $r$. Otherwise, the tuple is trivially fulfilled by any BFS ordering starting in $r$ or no such BFS ordering fulfills the tuple. The algorithm constructs an OBAP-instance with set $\Q_i \subseteq {\cal P}(N^i(r))$ and $\R_i \subseteq \Q_i \times \Q_i$ for any layer $i$ of the BFS. First we add the tuple $(\{x\},\{y\})$to the set $\R_i$ for every tuple $(x,y) \in \pi$ with $x,y \in N^i(r)$. Now the algorithm iterates through all layers starting in the last one. For any element $(A,B) \in \R_i$ the algorithm inserts a tuple $(A'',B')$ to the relation $\R_{i-1}$. The set $A''$ contains all neighbors of set $A$ in layer $i-1$ that are not neighbors of set $B$ and whose neighborhood in layer $i-2$ is maximal among all these neighbors. The set $B'$ contains all neighbors of $B$ in the layer $i-1$ that are not neighbors of $A$. At the end, the algorithm checks whether the OBAP-instance $(N^i(r), \Q_i, \R_i)$ of every layer $i$ can be solved. If this is not the case, then the algorithm rejects. Otherwise, it concatenates the computed OBA-orderings. The resulting ordering $\rho$ is used as tie-breaker for a LBFS$^+$ whose result is returned by the algorithm.

\begin{algorithm2e}[t]
\SetKw{KwDownto}{downto}
	\footnotesize
    \KwIn{Connected chordal bipartite graph $G$, vertex $r \in V(G)$, partial order $\pi$ on $V(G)$} 
    \KwOut{An LBFS ordering $\sigma$ of $G$ extending $\pi$ or ``$\pi$ cannot be linearized"}
    \Begin{
		let $k$ be the maximal distance of a vertex $v \in V(G)$ from $r$\;
		${\cal Q}_i \leftarrow \{\{x\}~|~x \in N^i(r)\} \qquad \forall i \in \{1,\ldots,k\}$\;
		$\R_i \leftarrow \{(\{x\},\{y\})~|~x,y \in N^i(r),~x \prec_\pi y\} \qquad \forall i \in \{1,\ldots,k\}$\;\label{line:cb1}
        \For{$i$ $\leftarrow$ $k$ \KwDownto 2}{
            \ForEach{$(A,B) \in \R_i$}{
                $A'$ $\leftarrow$ $[N(A) \cap N^{i-1}(r)]\setminus N(B)$\;
                $A''$ $\leftarrow$ $\{v \in A'~|~N(v) \cap N^{i-2}(r)=N(A') \cap N^{i-2}(r)\}$\;\label{line:cb3}
                $B'$ $\leftarrow$ $[N(B) \cap N^{i-1}(r)]\setminus N(A)$\;
                ${\cal Q}_{i-1}$ $\leftarrow$ ${\cal Q}_{i-1} \cup \{A'',B'\}$\;
                $\R_{i-1}$ $\leftarrow$ $\R_{i-1} \cup \{(A'',B')\}$\;\label{line:cb2}
            }
        }
        let $\rho$ be an empty vertex ordering\;
        \For{$i$ $\leftarrow$ $k$ \KwDownto 1}{
			\If{there is an OBA-ordering $\sigma$ for input ($N^i(r)$, ${\cal Q}_i$, $\R_i$)}{$\rho$ $\leftarrow$ $\sigma \mdoubleplus \rho$} 
			\lElse{\Return{``$\pi$ cannot be linearized"}}\label{line:cb4}
        }
        $\rho$ $\leftarrow$ $r \mdoubleplus \rho$\;
        \Return{LBFS$^+(\rho)$} of $G$\;
    }
    \caption{Rooted PSOP of LBFS on chordal bipartite graphs}\label{algo:chordal-bipartite}
\end{algorithm2e}

The following lemma is a direct consequence of the construction of the elements of $\R_i$ and \cref{lemma:neighborhoods}.

\begin{lemma}\label{lemma:algo-cb}
Let $(A,B) \in \R_i$. For any $x \in A$ it holds that $N(A) \cap N^{i-1}(r) \subseteq N(x)$ and if $B \neq \emptyset$ then there is a vertex $y \in B$ with $N(B) \cap N^{i-1}(r) \subseteq N(y)$.
\end{lemma}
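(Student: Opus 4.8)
The plan is to prove the two conjuncts by downward induction on the layer index, treating them asymmetrically: the first statement will fall out directly from the way the set $A''$ is defined in \cref{line:cb3}, while the second is the genuine content and is exactly where \cref{lemma:neighborhoods} enters. I would phrase the induction hypothesis at layer $i$ as the full statement of the lemma for every tuple $(A,B)\in\R_i$, and observe that each such tuple is either an initial singleton $(\{x\},\{y\})$ from \cref{line:cb1} or a tuple $(A'',B')$ produced while processing a parent tuple $(A_0,B_0)\in\R_{i+1}$. For the singletons both claims are trivial, and since $\R_k$ contains only singletons, this also supplies the base case $i=k$.

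For the first conjunct I would argue straight from the construction, with no induction needed. If $(A,B)=(A'',B')$ is a constructed tuple, then by the definition of $A''$ every $v\in A''$ satisfies $N(v)\cap N^{i-1}(r)=N(A')\cap N^{i-1}(r)$; since $A''\subseteq A'$ this forces $N(A'')\cap N^{i-1}(r)=N(A')\cap N^{i-1}(r)=N(x)\cap N^{i-1}(r)$ for every $x\in A''$, and hence $N(A)\cap N^{i-1}(r)\subseteq N(x)$. In effect $A''$ is designed so that all of its vertices share one and the same (and maximal) neighborhood in the layer below, which is precisely the first claim.

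The heart of the argument is the second conjunct, which I would establish by induction. Assume the lemma holds for $\R_{i+1}$ and let $(A'',B')\in\R_i$ be constructed from $(A_0,B_0)\in\R_{i+1}$, where $B'=[N(B_0)\cap N^i(r)]\setminus N(A_0)$ is nonempty (so $B_0\neq\emptyset$ as well, since $B'\subseteq N(B_0)$). Applying the inductive hypothesis to $(A_0,B_0)$ produces a vertex $z\in B_0\subseteq N^{i+1}(r)$ with $N(B_0)\cap N^i(r)\subseteq N(z)$; because $B'\subseteq N(B_0)\cap N^i(r)$, this single vertex $z$ is a common neighbor in $N^{i+1}(r)$ of \emph{every} vertex of $B'$. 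Now \cref{lemma:neighborhoods} applies to any pair $u,v\in B'\subseteq N^i(r)$ sharing the neighbor $z$, yielding $N(u)\cap N^{i-1}(r)\subseteq N(v)$ or $N(v)\cap N^{i-1}(r)\subseteq N(u)$. Thus the sets $\{N(v)\cap N^{i-1}(r): v\in B'\}$ are pairwise comparable under inclusion; being a finite family, they form a chain and so have a maximum, attained by some $y\in B'$, and then $N(B')\cap N^{i-1}(r)=N(y)\cap N^{i-1}(r)\subseteq N(y)$, as required.

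The step I expect to be the main obstacle is the production of the common neighbor $z$. \cref{lemma:neighborhoods} only constrains two vertices that share a neighbor in the succeeding layer, so the reasoning cannot stay local to layer $i$: it must draw on the inductive hypothesis one layer up to guarantee that all of $B'$ is dominated by a single vertex of $B_0$. Once that is secured, the passage from ``pairwise comparable'' to ``has a maximum'' for a finite family of sets is routine, and the first conjunct is merely a bookkeeping consequence of the definition of $A''$.
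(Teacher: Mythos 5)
Your proof is correct and takes essentially the same route as the paper's: the first conjunct is read off directly from the definition of $A''$ in line~\ref{line:cb3}, and the second is proved by downward induction, using the inductive hypothesis one layer up to produce a single vertex of $B_0$ dominating all of $B'$ (hence a common neighbor in $N^{i+1}(r)$) and then \cref{lemma:neighborhoods} to order the neighborhoods of $B'$ in $N^{i-1}(r)$ by inclusion and pick the maximum. Your handling of the base case (the line-\ref{line:cb1} singletons, which also cover all of $\R_k$) likewise matches the paper.
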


\begin{proof}
The first claim follows immediately from the construction of $\R_i$ in \cref{algo:chordal-bipartite} (see lines~\ref{line:cb1} and~\ref{line:cb3}). 

For the second claim we first observe that the claim trivially holds if $(A,B)$ was inserted into $\R_i$ in line~\ref{line:cb1} as $|B| = 1$. Furthermore, it follows directly that the claim holds for all elements of $\R_k$. Now assume that the claim holds for all elements of $\R_i$ and let $(A',B')$ be an element of $\R_{i-1}$ with $B' \neq \emptyset$ that was inserted into $\R_{i-1}$ in line~\ref{line:cb2}. Then there is an element $(A,B) \in \R_i$ for which it holds that $B' \subseteq N(B) \cap N^{i-1}(r)$ and, by assumption on $\R_i$, there is a vertex $x \in B$ with $B' \subseteq N(x)$. This means that all elements of $B'$ have a common neighbor in $N^i(r)$. It follows from \cref{lemma:neighborhoods} that the neighborhoods of the elements of $B'$ in $N^{i-2}(r)$ can be ordered by inclusion. Thus, there must be a vertex $y \in B'$ for which it holds that $N(B') \cap N^{i-2}(r) = N(y) \cap N^{i-2}(r) \subseteq N(y)$.
\end{proof}

Using this lemma, we can show the correctness of \cref{algo:chordal-bipartite}.

\begin{theorem}\label{thm:algo:chordal-bipartite}
Given a connected chordal bipartite graph $G$, a partial order $\pi$ on $V(G)$ and a vertex $r \in V(G)$, \cref{algo:chordal-bipartite} decides in time $\O(|\pi| \cdot |V(G)|^2)$ whether there is an LBFS ordering of $G$ that starts in $r$ and is a linear extension of $\pi$.
\end{theorem}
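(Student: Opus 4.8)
The plan is to prove correctness by establishing the equivalence: there is an LBFS ordering of $G$ starting in $r$ that extends $\pi$ if and only if each of the OBAP instances $(N^i(r),\Q_i,\R_i)$ constructed by \cref{algo:chordal-bipartite} admits an OBA-ordering. Since LBFS orderings are BFS orderings, every such ordering visits the layers $N^0(r),N^1(r),\dots,N^k(r)$ consecutively, so it decomposes into its restrictions $\sigma|_{N^i(r)}$. The first structural fact I would record is that, because $G$ is bipartite each layer is an independent set, so the only unvisited neighbours of a layer-$i$ vertex lie in $N^{i+1}(r)$; hence while LBFS processes layer $i$ the label of every layer-$i$ vertex is frozen and equals the set of visit indices of its neighbours in $N^{i-1}(r)$. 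Consequently $\sigma|_{N^i(r)}$ is exactly the lexicographic sort of $N^i(r)$ by these frozen labels, ties broken by $\rho$ in the case of LBFS$^+(\rho)$. This yields the \emph{layer criterion}: for $x,y\in N^i(r)$ one has $x\prec_\sigma y$ if and only if either the $\sigma$-earliest vertex of $(N(x)\triangle N(y))\cap N^{i-1}(r)$ belongs to $N(x)$, or $N(x)\cap N^{i-1}(r)=N(y)\cap N^{i-1}(r)$ and $x$ precedes $y$ in the tie-break.

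The crux is a two-directional reduction lemma linking a constraint $(A,B)\in\R_i$ with its descendant $(A'',B')\in\R_{i-1}$: for an LBFS ordering $\sigma$, the order $\sigma|_{N^i(r)}$ satisfies the One-Before-All constraint $(A,B)$ exactly when $\sigma|_{N^{i-1}(r)}$ satisfies $(A'',B')$ together with a tie-break honouring $(A,B)$. Here I would use \cref{lemma:algo-cb} heavily: all vertices of $A$ share the neighbourhood $N(A)\cap N^{i-1}(r)$, and $B$ contains a vertex $y_B$ with $N(B)\cap N^{i-1}(r)\subseteq N(y_B)$; therefore $A'=[N(A)\cap N^{i-1}(r)]\setminus N(B)$ and $B'$ are precisely the two sides of the symmetric difference of the neighbourhoods of a fixed $a\in A$ and of $y_B$. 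For the direction in which a witness ordering forces the descendant constraint, applying the layer criterion to $a\prec_\sigma y_B$ produces a vertex of $A'$ preceding all of $B'$; \cref{lemma:neighborhoods} then shows the layer-$(i-2)$ neighbourhoods of $A'$ form a chain, so the maximum-neighbourhood vertices $A''$ have frozen labels dominating those of the remaining $A'$-vertices and can be taken no later, giving a vertex of $A''$ before all of $B'$. For the converse I would distinguish $B'\neq\emptyset$, where a witness of $(A'',B')$ makes the frozen label of every $a\in A$ strictly dominate that of every $b\in B$, so all of $A$ precedes all of $B$ irrespective of the tie-break, from $B'=\emptyset$, where the labels only tie and the precedence is supplied by the tie-break respecting $(A,B)$.

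With the reduction lemma the theorem follows from two inductions. For necessity I take a witness ordering $\sigma^*$ and show by downward induction on $i$ (from $k$ to $1$) that $\sigma^*|_{N^i(r)}$ satisfies every constraint of $\R_i$: the constraints inserted in line~\ref{line:cb1} hold because $\sigma^*$ extends $\pi$, and the propagated ones hold by the reduction lemma; hence every OBAP instance is solvable and the algorithm does not reject. For sufficiency I assume all OBAP instances are solvable, form $\rho$ as in the algorithm, and consider the returned ordering $\sigma$, namely LBFS$^+(\rho)$; an upward induction on $i$ (its base being that $N^1(r)$ has constant labels, so $\sigma|_{N^1(r)}=\rho|_{N^1(r)}$ solves $\R_1$) together with the reduction lemma shows that $\sigma|_{N^i(r)}$ satisfies $\R_i$, and in particular the line-\ref{line:cb1} constraints, so $\sigma$ extends $\pi$; that $\sigma$ starts in $r$ and that cross-layer pairs of $\pi$ are automatically satisfied follows from the preprocessing assumptions stated before the algorithm. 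The step I expect to be the main obstacle is exactly the reduction lemma, above all the justification for restricting $A'$ to $A''$ and the clean separation of the strict-domination case from the tie-break case; everything else reduces to bookkeeping and the established correctness of OBAP (\cref{thm:algo-obaop}).

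Finally, for the running time I would precompute the BFS layering and, for every vertex $v$, its back-degree $|N(v)\cap N^{j-1}(r)|$, where $j$ is the layer of $v$, all in $\O(|V(G)|+|E(G)|)$. Each original pair of $\pi$ spawns at most one constraint per layer below it, so the total number of generated constraints is $\O(|\pi|\cdot|V(G)|)$. Using an adjacency matrix, the representative $a\in A$ and the dominating vertex $y_B$ give constant-time membership tests for $N(A)$ and $N(B)$, so $A'$ and $B'$ are computed in $\O(|V(G)|)$; by the chain property of \cref{lemma:neighborhoods}, a vertex of $A'$ lies in $A''$ precisely when its precomputed back-degree is maximal over $A'$, so $A''$ is also found in $\O(|V(G)|)$. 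Thus all constraints are produced in $\O(|\pi|\cdot|V(G)|^2)$ total, the combined size of the OBAP instances is $\O(|\pi|\cdot|V(G)|^2)$ and they are solved within this bound by \cref{thm:algo-obaop}, and the concluding LBFS$^+$ runs in $\O(|V(G)|+|E(G)|)$, giving the claimed $\O(|\pi|\cdot|V(G)|^2)$.
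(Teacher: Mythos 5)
Your proposal is correct, and since it analyzes the paper's fixed algorithm it necessarily shares the global skeleton with the paper's proof---layerwise decomposition of BFS/LBFS orderings, downward induction (the paper: a maximal violated layer) for necessity, upward induction (the paper: a minimal violated layer) for sufficiency, with \cref{lemma:algo-cb} and \cref{lemma:neighborhoods} in exactly the roles they play there, and an identical complexity accounting. The engine driving the LBFS-specific steps is genuinely different, though. The paper never makes the within-layer behaviour of LBFS explicit: both directions are contradiction arguments that extract a distinguishing earlier neighbour via \cref{lemma:plus} and then invoke the 4-point condition of \cref{lemma:4point-lbfs} (twice in the harder direction, first via the leftmost $z$ and then via the vertex $z'\in A''$ with strictly larger back-neighbourhood and its private neighbour $a$). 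You instead observe that in a bipartite graph each layer is independent, so labels freeze once a layer is reached and $\sigma|_{N^i(r)}$ is literally a lexicographic sort of the frozen back-neighbourhood labels with ties broken by $\rho$; your two-directional reduction lemma then does all the work by comparing $N(A)\cap N^{i-1}(r)$ with the label of each $b\in B$ through the dominating representative $y_B$ guaranteed by \cref{lemma:algo-cb}, while the chain property of \cref{lemma:neighborhoods} shows that the earliest $A'$-vertex in \emph{any} LBFS ordering lies in $A''$ --- which in particular explains transparently why the algorithm may shrink $A'$ to $A''$ in line~\ref{line:cb3}. Your route buys a symmetric, contradiction-free argument and isolates the tie-break cleanly; the paper's route avoids formulating a layer criterion and runs entirely on the two general-purpose lemmas. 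Two small imprecisions to fix in a write-up, neither fatal: the reduction lemma's ``exactly when'' must be read direction-by-direction, since in the necessity direction $\tau$ is an arbitrary LBFS ordering with no tie-break rule (only the forward implication is used there, exactly as you do use it); and in the converse's $B'=\emptyset$ case the frozen label of a vertex $b\in B$ may be \emph{strictly} contained in $N(A)\cap N^{i-1}(r)$ rather than tied, so the precedence comes either from strict dominance or from the tie-break --- your case split already accommodates both, but the sentence ``the labels only tie'' understates it. Also note, for the strict-dominance case, that a witness in $A''$ forces $A'\neq\emptyset$, which rules out a tied label $N(b)\cap N^{i-1}(r)=N(A)\cap N^{i-1}(r)$; this is the one spot where the claimed strict dominance over \emph{every} $b\in B$ needs a line of justification.
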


\begin{proof}
We first assume that \cref{algo:chordal-bipartite} returns an ordering $\sigma$. Clearly, this ordering is an LBFS ordering. In the following we will show that the ordering $\sigma$ fulfills all OBA constraints given by the relations $\R_i$. Since all constraints of $\pi$ are covered by constraints of some $\R_i$ this implies that $\sigma$ is a linear extension of $\pi$. 

Assume for contradiction that $\sigma$ does not fulfill all constraints of the relations $\R_i$. Let $i$ be the minimal index such that there is a tuple $(A,B) \in \R_i$ that is not fulfilled by $\sigma$, i.e., there is a vertex $y \in B$ that is to the left of any element of $A$ in $\sigma$. Since $\rho$ fulfills the OBA constraints of $\R_i$, there is a vertex $x \in A$ that is to the left of $y$ in $\rho$. Due to \cref{lemma:plus}, there must be a vertex $z \prec_\sigma y$ with $zy \in E(G)$ and $zx \notin E(G)$. 
If $i = 1$, then this is not possible since the only neighbor of vertices in $N^1(r)$ that is to the left of some of them is the vertex $r$ and this vertex is adjacent to all vertices in $N^1(r)$. Thus, we may assume that $i > 1$. Then there is a tuple $(A'', B') \in \R_{i-1}$ for which it holds that $A'' \subseteq [N(A) \setminus N(B)] \cap N^{i-1}(r)$ and $B' = [N(B) \setminus N(A)] \cap N^{i-1}(r)$. The vertex $z$ is in $N(B)$. Due to \cref{lemma:algo-cb}, the neighborhood of $x$ in $N^{i-1}(r)$ is equal to the neighborhood of $A$ in $N^{i-1}(r)$ and, thus, $x$ is adjacent to all elements of $A''$. As $z \notin N(x)$, vertex $z$ is not in $N(A)$ and, thus, vertex $z$ is in $B'$. From the choice of $i$ it follows that the tuple $(A'',B')$ is fulfilled by $\sigma$, i.e., there is a vertex $v \in A''$ that is to the left of any element of $B'$ in $\sigma$. Vertex $v$ is not adjacent to $y$ but vertex $v$ is adjacent to $x$. Furthermore, any vertex $w \prec_\sigma v$ is either adjacent to $x$ or not adjacent to $y$ since otherwise $w$ would be an element of $B'$ to the left of $v$ in $\sigma$. Hence, the vertices $v$, $x$ and $y$ do not fulfill the 4-point-condition of LBFS in \cref{lemma:4point-lbfs}; a contradiction as $\sigma$ is an LBFS ordering of $G$.

Now assume that the LBFS ordering $\tau$ of $G$ is a linear extension of $\pi$. We will show that $\tau$ fulfills all OBA constraints in any set $\R_i$. Therefore, for any $\R_i$ the subordering of $\tau$ containing the elements of $N^i(r)$ is an OBA ordering fulfilling $\R_i$ and \cref{algo:chordal-bipartite} never reaches line~\ref{line:cb4}. Hence, it returns some LBFS ordering which is a linear extension of $\pi$ as was shown above.

Assume for contradiction that $\tau$ does not fulfill all OBA constraints. Let $i$ be the maximal index for which there is a tuple $(A'',B') \in \R_i$ that is not fulfilled by $\tau$, i.e., there is a vertex $y' \in B'$ that is to the left of any element of $A''$ in $\tau$. Since $\tau$ is a linear extension of $\pi$, the tuple $(A'',B')$ was inserted into $\R_i$ in line~\ref{line:cb2} and not in line~\ref{line:cb1}. Therefore, there is a tuple $(A,B) \in \R_{i+1}$ with $A'' \subseteq N(A) \cap N^i(r)$ and $B' = [N(B) \setminus N(A)] \cap N^i(r)$. Due to the choice of $i$, the constraint given by $(A,B)$ is fulfilled, i.e, there is a vertex $x \in A$ that is to the left of any element of $B$ in $\tau$.
 Let $y$ be an element of $B$ such that $N(B) \cap N^i(r) \subseteq N(y)$. As $B'$ is not empty, $B$ is not empty and the vertex $y$ must exist, due to \cref{lemma:algo-cb}. It holds that $yy' \in E(G)$ and $xy' \notin E(G)$. Due to \cref{lemma:4point-lbfs}, there is a vertex $z \prec_\tau y'$ with $zx \in E(G)$ and $zy \notin E(G)$. Let $z$ be the leftmost vertex in $\tau$ fulfilling this property. By the choice of $y$ and $y'$, the vertex $z$ is not an element of $N(B) \cap N^i(r)$ and not an element of $A''$. However, $z \in [N(A) \setminus N(B)] \cap N^{i}(r)$. Due to \cref{lemma:algo-cb}, there is a vertex in $A$ that is adjacent to all vertices in $N(A) \cap N^i(r)$. It follows from \cref{lemma:neighborhoods} that the neighborhoods in $N^{i-1}(r)$ of the vertices contained in $N(A) \cap N^i(r)$ are ordered by inclusion. Therefore, there must be a vertex $z'$ in the set $X = [N(A) \setminus N(B)] \cap N^{i}(r)$ whose neighborhood in $N^{i-1}(r)$ is equal to the neighborhood of the set $X$ in $N^{i-1}(r)$. Due to the construction of $A''$ in \cref{algo:chordal-bipartite}, the vertex $z'$ is in $A''$. Furthermore, it holds that $z \prec_\tau z'$, due to the choice of $z$. However, it holds that $N(z) \cap N^{i-1}(r) \subsetneq N(z') \cap N^{i-1}(r)$ as $z$ is not in $A''$. Note that any neighbor of $z$ to the left of $z$ in $\tau$ must be in $N^{i-1}(r)$ and, thus, it is also a neighbor of $z'$. Furthermore, there is at least one neighbor $a$ of $z'$ in $N^{i-1}(r)$ that is not adjacent to $z$. As vertex $a$ must be also to the left of $z$ in $\tau$, the vertices $a$, $z$ and $z'$ contradict the 4-point condition of LBFS given in \cref{lemma:4point-lbfs}. 
 
 It remains to show that the algorithm has a running time within the given bound. In a pre-processing step we compute the size of the neighborhood of every vertex in its preceding layer. This can be done in time linear in the size of the graph. Then we can compute the tuple $(A'', B') \in \R_{i-1}$ for any tuple $(A,B) \in \R_i$ in time $|V(G)|$ as the vertices in $A$ and $B$ with the most neighbors in the preceding layer are exactly the vertices that are adjacent to all neighbors of $A$ and $B$ in the preceding layer, due to \cref{lemma:algo-cb}. For any tuple in $\pi$, there is at most one tuple in every $\R_i$. Thus, the total number of tuples in the $\R$-sets is bounded by $|\pi| \cdot |V(G)|$ and the total size of the sets in these tuples is bounded by $|\pi| \cdot |V(G)|^2$. By \cref{thm:algo-obaop}, we need time $\O(|\pi| \cdot |V(G)|^2)$ in total to solve all OBA instances. The final LBFS$^+$ needs only linear time~(c.f. \cite{corneil2009lbfs}).
\end{proof}

Due to Observation~\ref{obs:end-vertex}, we can solve the end-vertex problem of LBFS on chordal bipartite graphs by solving the rooted PSOP $|V(G)|$ times with a partial order of size $\O(|V(G)|)$. This leads to the following time bound.

\begin{corollary}
Given a connected chordal bipartite graph $G$, we can solve the end-vertex problem of LBFS on $G$ in time $\O(|V(G)|^4)$ .
\end{corollary}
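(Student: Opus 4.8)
The plan is to obtain the statement as a direct corollary of \cref{obs:end-vertex} and \cref{thm:algo:chordal-bipartite}, using the equivalence between the general and the rooted PSOP established in \cref{sec:chordal-bipartite}'s preliminaries. First I would fix the target vertex $t$ and build the partial order $\pi := \{(u,v) \mid u,v \in V(G),\ u = v \text{ or } v = t\}$ that forces $t$ to be last in every linear extension. By \cref{obs:end-vertex}, $t$ is an LBFS end-vertex of $G$ if and only if the (unrooted) PSOP of LBFS on $G$ with this $\pi$ is a yes-instance, and crucially $|\pi| = \O(|V(G)|)$.

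Since \cref{algo:chordal-bipartite} only decides the \emph{rooted} problem, the next step is to reduce the unrooted question to $|V(G)|$ rooted ones: I would enumerate every candidate start vertex $r \in V(G)$ and, for each, invoke \cref{algo:chordal-bipartite} on $(G,r,\pi)$. An LBFS ordering of $G$ ending in $t$ exists if and only if at least one of these calls returns an ordering, because any such ordering has a well-defined first vertex that serves as the root. Before each call I must massage $\pi$ so that it satisfies the algorithm's standing assumption that only vertices in the same BFS layer from $r$ are related: for the fixed root $r$ I would discard every pair $(u,t)$ with $u$ in a strictly earlier layer (trivially satisfied by any BFS from $r$) and immediately reject $r$ whenever some $u \prec_\pi t$ lies in a strictly later layer (no BFS from $r$ can then place $t$ last). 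This is exactly the layer-based preprocessing already justified when passing between the rooted and unrooted PSOP, and it costs only linear time per root.

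For the running time, \cref{thm:algo:chordal-bipartite} bounds each rooted call by $\O(|\pi| \cdot |V(G)|^2)$, which is $\O(|V(G)|^3)$ once we substitute $|\pi| = \O(|V(G)|)$; multiplying by the $|V(G)|$ choices of root yields the claimed $\O(|V(G)|^4)$ bound. I do not expect any genuinely hard step here, as the result is a straightforward composition of the two earlier results; the only point demanding a little care is confirming that the end-vertex encoding $\pi$ meets the input requirements of \cref{algo:chordal-bipartite} (in particular that each $r \neq t$ is a minimal element of $\pi$, so the root is never rejected for the trivial reason, and that $r = t$ is correctly discarded), and that the per-root layer filtering preserves equivalence with the end-vertex question.
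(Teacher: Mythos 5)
Your proposal is correct and matches the paper's argument exactly: encode the end-vertex $t$ via \cref{obs:end-vertex} as a partial order of size $\O(|V(G)|)$, run the rooted algorithm of \cref{thm:algo:chordal-bipartite} once for each of the $|V(G)|$ candidate roots, and multiply $\O(|\pi| \cdot |V(G)|^2) = \O(|V(G)|^3)$ by $|V(G)|$ to get $\O(|V(G)|^4)$. Your extra care about the same-layer assumption on $\pi$ and rejecting $r = t$ is sound but already subsumed by the paper's standing conventions (the layer preprocessing stated before \cref{algo:chordal-bipartite} and the requirement that the root be a minimal element of $\pi$).
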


Similarly, it follows from Proposition~\ref{prop:f-trees} that the rooted \cf-tree recognition problem can be solved in time $\O(|V(G)|^4)$. Different to the general case, we can show that for BFS orderings of bipartite graphs the \cl-tree recognition problem can also be reduced to the partial search order problem. 

\begin{proposition}\label{prop:l-trees-bipartite}
The rooted \cl-tree recognition problem of any graph search $\cal A$ that produces BFS orderings can be solved on a bipartite graph $G$ by solving the rooted PSOP of $\cal A$ on $G$. 
\end{proposition}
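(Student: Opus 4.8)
The plan is to exploit the rigidity of the BFS layer structure on bipartite graphs. The distances $d(r,v)$, and hence the layers $N^i(r)$, are determined by $G$ and $r$ alone and do not depend on the chosen ordering. In a bipartite graph any two vertices of a common layer $N^i(r)$ lie in the same part of the bipartition, since their distances to $r$ have equal parity, and are therefore nonadjacent; together with the standard BFS property that every edge joins equal or consecutive layers, this shows that \emph{every} edge of $G$ runs between two consecutive layers. Since, moreover, a BFS ordering visits the layers $N^0(r), N^1(r), \dots$ in this order, it follows that for any BFS ordering $\sigma$ starting in $r$ and any vertex $w \in N^i(r)$, the neighbors of $w$ that precede $w$ in $\sigma$ are exactly the vertices of $N(w) \cap N^{i-1}(r)$, a set independent of $\sigma$. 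This is precisely what resolves the disjunctive constraint described before \cref{prop:l-trees-bipartite}: whether a neighbor $z$ of $w$ must satisfy $z \prec_\sigma v \prec_\sigma w$ or $v \prec_\sigma w \prec_\sigma z$ is now decided solely by the layer of $z$.

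I would turn this into a reduction to the rooted PSOP. First I would perform a purely structural check: for every non-root vertex $w \in N^i(r)$ its parent $v$ in $T$ must satisfy $v \in N(w) \cap N^{i-1}(r)$. If this fails for some $w$, then $T$ cannot be the \cl-tree of any BFS ordering rooted in $r$ and the algorithm rejects. Otherwise, realizing $T$ amounts to requiring that each such $v$ be the rightmost preceding neighbor of $w$, i.e.\ the $\sigma$-rightmost element of $N(w) \cap N^{i-1}(r)$. I would encode this by the relation $\R'$ that contains a pair $(z,v)$ whenever $v$ is the $T$-parent of some vertex $w$ and $z$ is a neighbor of $w$ in $N^{d(r,w)-1}(r)$ with $z \neq v$, and let $\pi$ be the reflexive and transitive closure of $\R'$. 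If $\pi$ fails to be a partial order — that is, if the forced precedences contain a cycle — then no ordering can realize $T$ and we reject; otherwise we return the result of the rooted PSOP of $\A$ with input $G$, root $r$ and partial order $\pi$.

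For correctness I would argue both directions from the layer observation, paralleling the proof of \cref{prop:f-trees}. If $T$ is the \cl-tree of an $\A$-ordering $\sigma$ starting in $r$, then $\sigma$ is a BFS ordering; for each non-root $w \in N^i(r)$ with parent $v$ and each other neighbor $z \in N(w) \cap N^{i-1}(r)$, the vertex $z$ precedes $w$ and, as $v$ is the rightmost such neighbor, $z \prec_\sigma v$, so $\sigma$ extends $\R'$ and hence $\pi$. Conversely, if an $\A$-ordering $\sigma$ starting in $r$ extends $\pi$, then for every non-root $w \in N^i(r)$ the parent $v$ is adjacent to $w$, lies in $N^{i-1}(r)$, and satisfies $z \prec_\sigma v$ for every competing neighbor $z \in N(w) \cap N^{i-1}(r)$; since the preceding neighbors of $w$ are exactly $N(w) \cap N^{i-1}(r)$, vertex $v$ is the rightmost one, so the \cl-tree of $\sigma$ joins $w$ to $v$ and therefore equals $T$.

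The conceptually decisive step is the first one: recognizing that bipartiteness forces all edges between consecutive layers and thereby removes the ordering-dependence of which neighbors precede a given vertex. Once this is established, the structural feasibility check, the construction of $\pi$, and the two-directional verification are routine and essentially identical in spirit to the \cf-tree reduction of \cref{prop:f-trees}. The only extra subtlety is that the forced precedences collected in $\R'$ may form a cycle, in which case $\pi$ is not a partial order and $T$ is correctly rejected.
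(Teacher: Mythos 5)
Your proposal is correct and matches the paper's proof in essence: both encode the requirement that the $T$-parent $v$ of each vertex $w$ be the rightmost preceding neighbor by forcing every other neighbor of $w$ in the layer $N^{d(r,w)-1}(r)$ to precede $v$, justified by the same key observation that bipartiteness plus the BFS layer structure makes the set of $\sigma$-preceding neighbors of $w$ independent of $\sigma$. You are merely a bit more explicit than the paper about the degenerate cases (parent not in the preceding layer, non-antisymmetric closure) and about the forward direction, but the reduction and its correctness argument are the same.
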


\begin{proof}
For every vertex $y \neq r$ with parent $x$ in $T$, we insert the tuple $(x,y)$ to the partial order $\pi$. For any neighbor $z$ of $y$ that is in the same layer as $x$, we insert the tuple $(z,x)$ to $\pi$. We claim that any BFS ordering $\sigma$ extending $\pi$ has $T$ as its \cl-tree. Let $ab$ be an edge of $T$. W.l.o.g. vertex $a$ is the parent of $b$. From the construction of $\pi$ and $\sigma$ it follows that $a \prec_\sigma b$. Let $c$ be a neighbor of $b$ different from $a$ with $c \prec_\sigma b$. As $G$ is bipartite and $\sigma$ is a BFS ordering, vertex $c$ is in the same layer as the vertex $a$. Hence, it holds that $c \prec_\pi a$. As $\sigma$ is a linear extension of $\pi$, it also holds that $c \prec_\sigma a$. Thus, $c$ is not adjacent to $b$ in the $\cl$-tree of $\sigma$. This implies that $a$ is the parent of $b$ in this $\cl$-tree and, thus, the \cl-tree of $\sigma$ is equal to $T$.
\end{proof}

This proposition and the observation above lead to the following time bound for the search tree recognition problems on chordal bipartite graphs.

\begin{corollary}
On a chordal bipartite graph $G$, we can solve the rooted \cf-tree and the rooted \cl-tree recognition problem of LBFS in time $\O(|V(G)|^4)$.
\end{corollary}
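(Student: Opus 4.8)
The plan is to feed the partial orders produced by the two reductions of Propositions~\ref{prop:f-trees} and~\ref{prop:l-trees-bipartite} into the algorithm of Theorem~\ref{thm:algo:chordal-bipartite}, which decides the rooted PSOP of LBFS on a chordal bipartite graph in time $\O(|\pi|\cdot|V(G)|^2)$. The only quantity I need to control is the size $|\pi|$ of the resulting partial order. In both reductions $\pi$ is a binary relation on $V(G)$, so $|\pi|\le|V(G)|^2$; substituting this into the running time of Theorem~\ref{thm:algo:chordal-bipartite} yields the claimed $\O(|V(G)|^4)$ bound. Thus the whole corollary reduces to checking that each reduction is applicable and that the preprocessing stays within $\O(|V(G)|^4)$.

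For the rooted $\cf$-tree recognition problem I would, given the rooted spanning tree $T$, build the relation $\R$ of Proposition~\ref{prop:f-trees} and compute its reflexive, transitive closure $\pi$ in $\O(|V(G)|^3)$ time. If $\pi$ fails to be antisymmetric then, by that proposition, $T$ is the $\cf$-tree of no LBFS ordering and I reject; otherwise $\pi$ is a partial order on $V(G)$ with $|\pi|\le|V(G)|^2$, and running the algorithm of Theorem~\ref{thm:algo:chordal-bipartite} on $(G,r,\pi)$ decides whether $T$ is the $\cf$-tree of an LBFS ordering rooted in $r$. The transitive-closure step is dominated by the $\O(|V(G)|^4)$ solving step, giving the stated bound.

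For the rooted $\cl$-tree recognition problem I would first record that every LBFS ordering is a BFS ordering (indeed $A\prec_{BFS}B$ implies $A\prec_{LBFS}B$, so LBFS refines BFS) and that a chordal bipartite graph is in particular bipartite. Hence Proposition~\ref{prop:l-trees-bipartite} applies to LBFS and reduces the rooted $\cl$-tree recognition problem to the rooted PSOP of LBFS through a partial order $\pi$ on $V(G)$, again with $|\pi|\le|V(G)|^2$. Applying Theorem~\ref{thm:algo:chordal-bipartite} to this instance settles the problem within the same $\O(|V(G)|^4)$ bound. I expect no real obstacle here: all the difficulty already sits in the earlier propositions and in the PSOP algorithm, and the only genuine points of care are the uniform bound $|\pi|\le|V(G)|^2$, which converts the parametrized running time of Theorem~\ref{thm:algo:chordal-bipartite} into a clean $\O(|V(G)|^4)$, and the observation that LBFS produces BFS orderings so that Proposition~\ref{prop:l-trees-bipartite} is legitimately invoked.
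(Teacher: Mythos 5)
Your proposal is correct and follows exactly the paper's route: the corollary is obtained by feeding the partial orders from Propositions~\ref{prop:f-trees} and~\ref{prop:l-trees-bipartite} into the algorithm of Theorem~\ref{thm:algo:chordal-bipartite} and using $|\pi|\le|V(G)|^2$ to get $\O(|V(G)|^4)$. Your added checks (computing and testing the transitive closure for antisymmetry, and verifying that LBFS orderings are BFS orderings so Proposition~\ref{prop:l-trees-bipartite} applies) are exactly the details the paper leaves implicit, and they are handled correctly.
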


    \section{Partial LBFS and MCS Orders of Split Graphs} 

Both the end-vertex problem and the \cf-tree recognition problem of several searches are well studied on split graphs (see~\cite{beisegel2019end-vertex,beisegel2021recognition,charbit2014influence}). In this section we will generalize some of these results to the partial search order problem. 

Consider a split graph $G$ with a split partition consisting of a clique $C$ and an independent set $I$. During a computation of an MNS ordering of $G$, every vertex that has labeled some vertex in $I$ has also labeled every unnumbered vertex contained in $C$. Therefore, we can choose a vertex of $C$ as the next vertex as long as there are still unnumbered vertices in $C$. This means that it is not a problem to force a clique vertex to be to the left of an independent vertex in an MNS ordering. However, forcing a vertex of $I$ to be to the left of a vertex of $C$ is more difficult. We will call a vertex of $I$ that is left to a vertex of $C$ in a vertex ordering $\sigma$ a \emph{premature vertex of $\sigma$}. The neighbors of such a premature vertex must fulfill a strong condition on their positions in $\sigma$ as the following lemma shows.

\begin{lemma}[{\cite[Lemma~22]{beisegel2018recognizing}}]\label{lemma:split-order}
Let $G$ be a split graph with a split partition consisting of the clique $C$ and the independent set $I$.  Let $\sigma$ be an MNS ordering of $G$. If the vertex $x \in I$ is a premature vertex of $\sigma$, then any vertex of $C$ that is to the left of $x$ in $\sigma$ is a neighbor of $x$ and any non-neighbor of $x$ that is to the right of $x$ in $\sigma$ is also to the right of any neighbor of $x$ in $\sigma$.
\end{lemma}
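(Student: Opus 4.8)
The plan is to argue entirely from the \emph{maximality principle} that the definition of MNS supplies, without invoking any external 4-point condition (the excerpt only states the LBFS one). Write $\mathrm{label}(u)$ for the set of already-numbered neighbors of $u$ at a given step of the search. The principle is: at the moment MNS numbers a vertex $z$, it lies in the Eligible set, so no still-unnumbered vertex $y$ can satisfy $\mathrm{label}(z)\subsetneq\mathrm{label}(y)$ (this is immediate from the framework and from $A\prec_{MNS}B\iff A\subsetneq B$). The one recurring computation will be this: whenever a vertex $z\in I$ is numbered while some clique vertex $c^*$ is still unnumbered, every element of $\mathrm{label}(z)$ is a clique vertex different from $c^*$ (as $N(z)\subseteq C$), hence a neighbor of $c^*$, so $\mathrm{label}(z)\subseteq\mathrm{label}(c^*)$. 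To reach a contradiction I then only need one numbered neighbor of $c^*$ that is non-adjacent to $z$, which forces $\mathrm{label}(z)\subsetneq\mathrm{label}(c^*)$ and contradicts eligibility of $z$.

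For the first assertion I take a clique vertex $c'\prec_\sigma x$ and assume $c'x\notin E(G)$. Since $x$ is premature there is a clique vertex $c$ with $x\prec_\sigma c$, still unnumbered when $x$ is numbered. The recurring computation with $z=x$, $c^*=c$ gives $\mathrm{label}(x)\subseteq\mathrm{label}(c)$, and $c'$ witnesses strict containment because $c'c\in E(G)$ (both in $C$) places $c'\in\mathrm{label}(c)$ while $c'x\notin E(G)$ keeps $c'\notin\mathrm{label}(x)$. This contradicts eligibility of $x$.

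For the second assertion I suppose there are a neighbor $v\in N(x)\subseteq C$ and a non-neighbor $w$ of $x$ with $x\prec_\sigma w\prec_\sigma v$, and use throughout that $x\in\mathrm{label}(v)$ but $x\notin\mathrm{label}(w)$. If $w\in I$, then at the step numbering $w$ the vertex $v$ is unnumbered, the recurring computation gives $\mathrm{label}(w)\subseteq\mathrm{label}(v)$, and $x\in\mathrm{label}(v)\setminus\mathrm{label}(w)$ makes it strict, contradicting eligibility of $w$. The genuinely harder case is $w\in C$: here eligibility of $w$ only yields $\mathrm{label}(w)\not\subsetneq\mathrm{label}(v)$, so since $x\in\mathrm{label}(v)\setminus\mathrm{label}(w)$ there must be a numbered vertex $u\in\mathrm{label}(w)\setminus\mathrm{label}(v)$; one checks $u\in I$ (because $uv\notin E(G)$ and $v\in C$) and $u\neq x$ (because $x\notin\mathrm{label}(w)$).

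The crux, and the step I expect to be the main obstacle, is the case split on the relative order of $x$ and this auxiliary vertex $u$, which is delicate because $x$ and $u$ are two independent-set vertices playing dual roles ($x\sim v\not\sim w$ versus $u\sim w\not\sim v$). If $x\prec_\sigma u$, I apply the recurring computation to $z=u$ with the unnumbered clique vertex $v$; then $x\in\mathrm{label}(v)\setminus\mathrm{label}(u)$ (the two $I$-vertices are non-adjacent) forces $\mathrm{label}(u)\subsetneq\mathrm{label}(v)$, contradicting eligibility of $u$. If $u\prec_\sigma x$, I apply it symmetrically to $z=x$ with the unnumbered clique vertex $w$, where $u\in\mathrm{label}(w)\setminus\mathrm{label}(x)$ contradicts eligibility of $x$. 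Both branches close, proving the second assertion. As a sanity check I would observe that prematurity is really only used in the first assertion, to produce the witness $c$; in the second assertion it is automatic once $w\in C$, and the $w\in I$ subcase never needs it.
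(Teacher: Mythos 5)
Your proof is correct, and it is complete: the recurring computation (for $z\in I$ and unnumbered $c^*\in C$, every element of $\mathrm{label}(z)$ is a numbered clique vertex distinct from $c^*$, hence $\mathrm{label}(z)\subseteq\mathrm{label}(c^*)$) is sound, the extraction of the auxiliary vertex $u\in\mathrm{label}(w)\setminus\mathrm{label}(v)$ in the case $w\in C$ is justified exactly as you say ($u$ is numbered, $uv\notin E(G)$ forces $u\in I$, and $u\neq x$ since $x\notin\mathrm{label}(w)$), and both branches of the case split on the order of $x$ and $u$ close by the same eligibility contradiction. One point of comparison is worth noting: the paper itself contains no proof of this lemma --- it is imported by citation from Lemma~22 of Beisegel et al., so there is no in-paper argument to match yours against. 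Your route is the natural self-contained one from the Label Search framework, and it is essentially equivalent to arguing via the 4-point condition for MNS (the analogue of \cref{lemma:4point-lbfs}: for $a\prec_\sigma b\prec_\sigma c$ with $ac\in E(G)$, $ab\notin E(G)$ there is $d\prec_\sigma b$ with $bd\in E(G)$, $cd\notin E(G)$); your auxiliary vertex $u$ is precisely the witness $d$ that the 4-point condition would hand you for the triple $(x,w,v)$, and your two subcases correspond to a second application of that condition. What your formulation buys is that it needs no external characterization theorem, only the definition of $\prec_{MNS}$ and eligibility; what the 4-point route buys is brevity, since the witness extraction is packaged once and for all. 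Your closing sanity check is also accurate: prematurity of $x$ is only needed to produce the witness $c$ in the first assertion, while in the second assertion the case $w\in C$ makes $x$ premature automatically and the case $w\in I$ never uses it.
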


Similar to total orders we will call a vertex $x \in I$ a \emph{premature vertex of partial order $\pi$} if there is an element $y \in C$ with $x \prec_\pi y$. To decide whether a partial order $\pi$ can be extended by an MNS ordering the set of premature vertices of $\pi$ must fulfill strong properties which we define in the following. 

\begin{definition}
Let $G$ be a split graph with a split partition consisting of the clique $C$ and the independent set $I$. Let $\pi$ be a partial order on $V(G)$ and let $A$ be a subset of $I$. The tuple $(\pi, A)$ fulfills the \emph{nested property} if the following conditions hold:
\indent
\begin{enumerate}[leftmargin=1cm]
  \item[(N1)] If $y \in C$ and $x \prec_\pi y$, then $x \in C \cup A$.
  \item[(N2)] The neighborhoods of the elements of $A$ can be ordered by inclusion, i.e., there are pairwise disjoint sets $C_1, I_1, C_2, I_2, \ldots, C_k, I_k$ with $\bigcup_{j=1}^k I_j = A$ and for any $i \in \{1,\ldots,k\}$ and any $x \in I_i$ it holds that $N(x) = \bigcup_{j=1}^i C_j$.
  \item[(N3)] If $y \in C_i \cup I_i$ and $x \prec_\pi y$, then $x \in C_j \cup I_j$ with $j \leq i$.
  \item[(N4)] For any $i \in \{1,\ldots,k\}$ there is at most one vertex $x \in I_i$ for which there exists a vertex $y \in C_i$ with $x \prec_\pi y$.

\end{enumerate}
The \emph{nested partial order} $\pi^N(\pi, A)$ is defined as the reflexive and transitive closure of the relation containing the following tuples:
\begin{enumerate}[leftmargin=1cm]
  \item[(P1)] $(x,y) \quad \forall x,y \in V(G)$ with $x \prec_\pi y$
  \item[(P2)] $(x,y) \quad \forall x \in I_i \cup C_i, y \in V(G) \setminus \bigcup_{j=1}^i (I_j \cup C_j)$
  \item[(P3)] $(x,y) \quad \forall x \in C, y \in I \setminus A$ 
  \item[(P4)] $(x,y) \quad \forall x,y \in I_i$ for which $\exists z \in C_i \text{ with } x \prec_\pi z$
\end{enumerate}
\end{definition}

It is straightforward to check that $\pi^N(\pi, A)$ is a partial order if $(\pi, A)$ fulfills the nested property. We first show that the set $A$ of the premature vertices of a partial order $\pi$ must necessarily fulfill the nested property if there is an MNS ordering extending $\pi$. Furthermore, any such MNS ordering fulfills a large subset of the constraints given by the nested partial order $\pi^N(\pi, A)$.

\begin{lemma}\label{lemma:nested-necessary}
Let $G$ be a split graph with a split partition consisting of the clique $C$ and the independent set $I$ and let $\pi$ be a partial order on $V(G)$. Let $A = \{v \in I~|~\exists w \in C \text{ with }v \prec_\pi w\}$. If there is an MNS ordering $\sigma$ of $G$ extending $\pi$, then $(\pi, A)$ fulfills the nested property. If $x \prec_\sigma y$ but $(y,x) \in \pi^N(\pi, A)$, then $x \notin A \cup C$.
\end{lemma}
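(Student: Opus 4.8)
The plan is to prove the two assertions in turn, with almost all of the work resting on a single structural device: a \emph{direction lemma} for premature vertices. Since $\sigma$ extends $\pi$, every element of $A$ is a premature vertex of $\sigma$, so \cref{lemma:split-order} applies to each of them. First I would record the following consequence: for $x,y\in A$ with $x\prec_\sigma y$ one has $N(x)\subseteq N(y)$. Indeed, $x,y\in I$ are non-adjacent, so $y$ is a non-neighbor of $x$ lying to the right of $x$; by \cref{lemma:split-order} every neighbor of $x$ then lies to the left of $y$, and, being a clique vertex to the left of the premature vertex $y$, each such neighbor is a neighbor of $y$. This immediately yields that the neighborhoods of the elements of $A$ are pairwise comparable, hence form a chain $N_1\subsetneq\cdots\subsetneq N_k$; setting $I_i:=\{w\in A\mid N(w)=N_i\}$ and $C_i:=N_i\setminus N_{i-1}$ (with $N_0=\emptyset$) produces exactly the decomposition demanded by (N2), and shows that premature vertices occur in $\sigma$ in order of increasing level.

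With this decomposition fixed, the conditions (N1), (N3), (N4) follow by short arguments. Condition (N1) is immediate from the definition of $A$: if $x\prec_\pi y$ with $y\in C$ and $x\in I$, then $x$ is premature, hence $x\in A$. For (N3) and (N4) the recurring tactic is: whenever $\pi$ forces some vertex before a clique vertex $z$ that is a neighbor of a premature vertex $w$ (so $z\in N(w)$), I use that $\sigma$ extends $\pi$ together with \cref{lemma:split-order} applied to $w$ to locate the offending vertex and derive a contradiction. Concretely, for (N3) I show that a $\pi$-predecessor of a vertex in $C_i\cup I_i$ can neither have a strictly larger level nor lie outside $\bigcup_{j\le i}(C_j\cup I_j)$: a premature predecessor of higher level would, via the direction lemma and \cref{lemma:split-order} for a witness $w\in I_i$, be forced to the right of a clique vertex that it is required by $\pi$ to precede. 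Condition (N4) follows similarly: two distinct members of $I_i$ each forced before a vertex of $C_i$ would, via \cref{lemma:split-order} applied to the $\sigma$-later one, contradict the fact that $\sigma$ extends $\pi$.

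The second assertion is the part I expect to be the main obstacle. Its core is to show that $\sigma$ respects every constraint of $\pi^N(\pi,A)$ whose larger element lies in $A\cup C$, and I plan to prove this generator by generator and then lift it through the transitive closure. For each of the four generator types (P1)--(P4) I would verify two facts when the larger element $v$ lies in $A\cup C$: first that $\sigma$ orders the pair correctly, and second that the smaller element again lies in $A\cup C$. The correctness checks all reduce to the mechanism above: if a clique neighbor $z$ of a premature vertex $w$ is required by the generator to follow some vertex, then placing it earlier would force, through \cref{lemma:split-order} for $w$ together with the direction lemma or the level ordering, a contradiction with the fact that $\sigma$ extends $\pi$. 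The closure property (the smaller element staying in $A\cup C$) is exactly where (N1) and (N3) are used for type (P1); it is automatic for (P2) and (P4) and vacuous for (P3), whose larger element always lies in $I\setminus A$.

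Finally, I would combine these two facts by backward induction along a chain of generators witnessing $(y,x)\in\pi^N(\pi,A)$: starting from $x\in A\cup C$, each predecessor along the chain remains in $A\cup C$ and is placed earlier by $\sigma$, so $y\prec_\sigma x$. Contraposing gives precisely the claimed statement that any $\pi^N$-constraint reversed by $\sigma$ has its $\sigma$-earlier endpoint outside $A\cup C$. The delicate point throughout is bookkeeping: every contradiction is obtained by applying \cref{lemma:split-order} to the \emph{correct} premature vertex — often a witness in $I_i$ realizing the relevant level $N_i$ rather than the vertex named in the constraint — and the hardest part is organizing the second assertion so that the transitive closure can be traversed without ever leaving $A\cup C$.
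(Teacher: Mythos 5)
Your proposal is correct, and its load-bearing parts coincide with the paper's: your ``direction lemma'' is verbatim the paper's Claim~1 (proved the same way, applying \cref{lemma:split-order} once to the earlier premature vertex and once to the later one), the chain-of-neighborhoods decomposition for (N2) is identical, and the trick of applying \cref{lemma:split-order} to a witness $w \in I_i$ realizing level $i$ rather than to the vertex named in the constraint is exactly how the paper argues in its Claim~3. The structural difference lies in the second assertion. The paper strengthens its intermediate claims: Claims~2 and~3 there take the hypothesis $x \prec_\sigma y$ (not $x \prec_\pi y$) and conclude ``either $x$ lies in a level $j \leq i$, or $x \not\prec_\pi y$ and $x \notin A \cup C$'', so the same two claims deliver (N3), (N4) and, in one short closing paragraph, the second assertion. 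You instead prove (N1)--(N4) from $\pi$-precedence alone and then treat $\pi^N(\pi,A)$ generator by generator, checking for each of (P1)--(P4) both that $\sigma$ respects the pair when its larger element lies in $A \cup C$ and that $A \cup C$ is closed under passing to the smaller element, finishing with a backward induction along a generator chain witnessing $(y,x) \in \pi^N(\pi, A)$. Your route is longer but arguably more airtight: the paper's sentence ``If $y \in A \cup N(A)$, then it follows from Claims~2 and~3 that $x \notin A \cup C$'' silently requires ruling out that $x$ sits in a level $j \leq i$ (via antisymmetry of $\pi^N$ for $j < i$ and the (P1)/(P4) mechanism within a level), which is precisely what your closure-plus-induction bookkeeping makes explicit. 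One small slip: in your (N4) sketch (and implicitly for (P4)) \cref{lemma:split-order} must be applied to the $\sigma$-\emph{earlier} of the two vertices of $I_i$ --- its conclusion about non-neighbors to the right is what forces the later vertex behind all of $C_i$, contradicting the $\pi$-constraint --- not to the $\sigma$-later one as you wrote; your surrounding description shows you have the right mechanism, but the wording should be corrected.
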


\begin{proof}
Due to the definition of $A$, the property (N1) trivially holds true. Let $\sigma$ be an MNS ordering extending $\pi$.

\emph{\textbf{Claim 1} If $x,y \in A$ and $x \prec_\sigma y$, then $N(x) \subseteq N(y)$:} As $y \in A$, there is a vertex $c \in C$ with $y \prec_\pi c$. Since $\sigma$ is a linear extension of $\pi$ it also holds that $x \prec_\sigma y \prec_\sigma c$. Due to \cref{lemma:split-order}, any neighbor of $x$ is to the left of $y$ in $\sigma$ and all these vertices are also neighbors of $y$. Hence, $N(x) \subseteq N(y)$. \hfill $\blacksquare$

Claim~1 implies that (N2) holds. Thus, we may assume in the following that the sets $C_i$ and $I_i$ defined in (N2) exist.

\emph{\textbf{Claim 2} If $y \in I_i$ and $x \prec_\sigma y$, then either $x \in C_j \cup I_j$ for some $j \leq i$ or $x \not\prec_\pi y$ and $x \notin A \cup C$: } 
As $y \in A$, there is a vertex $c \in C$ with $y \prec_\pi c$. If $x \in C$, then it follows from \cref{lemma:split-order} that $x \in N(y)$ and, therefore, $x \in C_j$ for some $j \leq i$. Thus, we may assume that $x \in I$. Due to Claim~1, it holds $N(x) \subseteq N(y)$. Thus, if $x \in A$, then $x \in I_j$ for some $j \leq i$. On the other hand, if $x \prec_\pi y$, then the transitivity of $\pi$ implies that $x \prec_\pi c$ and, thus, $x$ is an element of $A$. Thus, in any case Claim~2 holds. \hfill $\blacksquare$

Claim~2 implies that (N3) holds for all elements of all sets $I_i$ since $x \prec_\pi y$ implies that $x \prec_\sigma y$.

\emph{\textbf{Claim 3} If $y \in C_i$ and $x \prec_\sigma y$, then either $x \in C_j \cup I_j$ for some $j \leq i$ or $x \not\prec_\pi y$ and $x \notin A \cup C$:}
First assume that $x \in C$. If there is any vertex $z \in I_i$ with $x \prec_\sigma z$, then Claim~2 implies directly that $x$ fulfills the conditions of Claim~3. Thus, we may assume that there is a vertex $z \in I_i$ with $z \prec_\sigma x$. As $y \in N(z)$, it follows from \cref{lemma:split-order} that $x$ is also in $N(z)$ and, therefore, in some $C_j$ with $j \leq i$. Now assume that $x \in I$. It follows from \cref{lemma:split-order}, that $x \in N(y)$. Thus, if $x \in A$, then $x \in I_j$ for some $j \leq i$. On the other hand, if $x \prec_\pi y$, then $x$ is in $A$. Thus, in any case Claim~3 holds. \hfill $\blacksquare$  

Similar to Claim~2, Claim~3 implies that (N3) holds for any element of any $C_i$. Therefore, we have proven that (N3) holds in general.

To prove (N4), assume that there is vertex $x \in I_i$ and a vertex $y \in C_i$ with $x \prec_\sigma y$. Let $x$ be the leftmost vertex of $I_i$ in $\sigma$. By \cref{lemma:split-order} any other vertex $z \in I_i$ must be to the right of any neighbor of $x$. Thus, $z$ is to the right of any vertex $c \in C_i$ in $\sigma$ and $z \not \prec_\pi c$ for any $c \in C_i$. This implies property (N4).

Finally assume that there are vertices $x,y \in V(G)$ with $x \prec_\sigma y$ but $(y,x) \in \pi^N(\pi, A)$. The tuple $(y,x)$ cannot be a tuple of $\pi$ since $\sigma$ is a linear extension of $\pi$. If $y \in A \cup N(A)$, then it follows from Claims~2 and~3 that $x \notin A \cup C$. If $y \notin A \cup N(A)$ and $(y,x) \notin \pi$, then it follows from the construction of $\pi^N(\pi, A)$ that $x$ is an element of $I \setminus A$. Thus, in any case $x \notin A \cup C$.
\end{proof}

The nested property is, in a restricted way, also sufficient for the existence of a MNS ordering extending $\pi$. We show that if $(\pi, A)$ fulfills the nested property, then there is an MNS ordering that fulfills all tuples of $\pi$ that contain elements of the set $A$ or the clique $C$. This ordering can be found using an $\A^+$-algorithm. 

\begin{lemma}\label{lemma:nested-sufficient} 
Let $G$ be a split graph with a split partition consisting of the clique $C$ and the independent set $I$, let $\pi$ be a partial order on $V(G)$ and $A$ be a subset of $I$. Assume $(\pi, A)$ fulfills the nested property and let $\rho$ be a linear extension of $\pi' = \pi^N(\pi, A)$. Then for any graph search $\cal{A} \in \{$MNS, MCS, LBFS$\}$ the ordering $\sigma = \cal{A}^+(\rho)$ of $G$ fulfills the following property: If $x \prec_{\pi'} y$, then $x \prec_\sigma y$ or both $x$ and $y$ are not in $A \cup C$.
\end{lemma}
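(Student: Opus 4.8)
The plan is to argue by contradiction via a leftmost offending pair. Suppose $\sigma=\mathcal{A}^+(\rho)$ violates the claim; then there are vertices with $x\prec_{\pi'}y$, $y\prec_\sigma x$ and $\{x,y\}\cap(A\cup C)\neq\emptyset$, and among all such pairs I would pick one for which $y$ is as far left in $\sigma$ as possible. First I would reduce to the case $x\in A\cup C$: if $y\in I\setminus A$ this is forced directly by the hypothesis, and otherwise it follows from a short \emph{predecessor-closure} observation — inspecting the generators (P1)--(P4) shows that $w\in A\cup C$ together with $u\prec_{\pi'}w$ implies $u\in A\cup C$ (using (N1) when $w\in C$ and (N3) when $w\in A$). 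Since $\rho$ extends $\pi'$ we have $x\prec_\rho y$ while $y\prec_\sigma x$, so \cref{lemma:plus} produces a vertex $z$ with $z\prec_\sigma y$, $zy\in E(G)$ and $zx\notin E(G)$; in particular $z\neq x$ and $z\prec_\sigma x$.

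The heart of the proof is to show that this forced neighbour already satisfies $x\prec_{\pi'}z$. Once this is established, $(x,z)$ is itself an offending pair (with $x\in A\cup C$) whose ``too-early'' element $z$ lies strictly left of $y$ in $\sigma$, contradicting the minimal choice of $y$. To organise the cases I would use the stratification implicit in (N2): define $\ell(v)=i$ for $v\in C_i\cup I_i$ and $\ell(v)=k+1$ otherwise, and verify the monotonicity statement that $u\prec_{\pi'}v$ implies $\ell(u)\le\ell(v)$ — each generator is non-decreasing in $\ell$, with (N3) handling (P1) and (P2) being strictly increasing. The case $x\in C$ is then easy: since $zx\notin E(G)$ and $C$ is a clique we get $z\in I$, and if $z$ lay in some $I_l$ its neighbour $y$ would satisfy $\ell(y)\le l<\ell(x)$, contradicting $\ell(x)\le\ell(y)$; hence $z\in I\setminus A$ and (P3) gives $x\prec_{\pi'}z$.

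The main obstacle is the case $x\in A$, say $x\in I_a$ with $N(x)=C_1\cup\cdots\cup C_a$. A level count shows $\ell(z)\ge a$, since a smaller level would force either $z\in N(x)$ (when $z$ is a clique vertex) or $\ell(y)<a\le\ell(x)$ (when $z\in I_i$, $i<a$), both impossible. If $\ell(z)>a$ then $z$ lies outside the first $a$ strata and (P2) yields $x\prec_{\pi'}z$ at once. The delicate subcase is $\ell(z)=a$: then $z\notin C_a$ (else $z\in N(x)$), so $z\in I_a$, and I must extract a relation between two \emph{same-level} independent vertices. Here I would use monotonicity to show that any $\pi'$-chain from $x$ to its witness $y\in C_a$ stays entirely inside level $a$, hence uses only (P1)- and (P4)-steps; tracking the last independent vertex $v_s$ on this chain, the subsequent step must be a (P1)-step into $C_a$, which by (P4) forces $v_s\prec_{\pi'}w$ for \emph{every} $w\in I_a$, in particular $v_s\prec_{\pi'}z$. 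Combining with $x\preceq_{\pi'}v_s$ gives $x\prec_{\pi'}z$, completing the reduction and hence the contradiction. I expect this same-level analysis — and setting up $\ell$ so that all generators behave monotonically — to be the technically most demanding step; by contrast \cref{lemma:split-order} does not appear to be needed for this direction, as the relevant tie-breaking behaviour is already captured by \cref{lemma:plus}.
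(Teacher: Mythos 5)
Your proof is correct and follows essentially the same route as the paper's: a leftmost-in-$\sigma$ offending vertex, \cref{lemma:plus} to extract the witness $z$, and a case split on whether the $\pi'$-smaller vertex lies in $C$ or in some $I_a$, closed using (P2)--(P4) and the strata from (N2). The only divergences are local: your level function $\ell$ and the chain argument through the last independent vertex $v_s$ make explicit what the paper argues more tersely (that a $\pi'$-relation from $I_i$ into $C_i$ triggers (P4)), and in the case $x \in C$ you exclude $z \in A$ by level monotonicity and recurse on the minimal choice, whereas the paper instead applies \cref{lemma:plus} a second time to force $z \in A$ and then derives a direct adjacency contradiction.
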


\begin{proof}
Assume for contradiction that $\sigma$ does not fulfill this property. Let $\pi' = \pi^N(\pi, A)$. Let $x$ be the leftmost vertex in $\sigma$ such that there is a vertex $y \in V(G)$ with $x \prec_\sigma y$, $y \prec_{\pi'} x$ and at least one of $x$ and $y$ are in $A \cup C$. Since $\rho$ is a linear extension of $\pi'$, it holds that $y \prec_\rho x$. Due to \cref{lemma:plus}, there must be a vertex $z \in V(G)$ with $ z \prec_\sigma x$, $xz \in E(G)$ and $yz \notin E(G)$. 

First assume that $y \in I$. If $x$ is in $A$ or in $C$, then $y \in A$, due to property (N3) or (N1), respectively. Otherwise, $y \in A$ by the choice of $x$ and $y$. Thus, in any case, we may assume that $y \in A$. Due to the choice of $x$, it must hold that $y \not \prec_{\pi'} z$. Let $I_i$ be the set containing $y$. Since $yz \notin E(G)$ and $y \not \prec_{\pi'} z$ it follows that $z \in \bigcup_{j=1}^i I_j$. Therefore, $x \in \bigcup_{j=1}^{i} C_j$ since $xz \in E(G)$. As $y \prec_{\pi'} x$ it follows from (N3) that $x \in C_i$ and, hence, $z \in I_i$. Furthermore, $(y,x)$ can only be an element of $\pi'$ if it is an element of $\pi$. However, this a contradiction to the choice of $x$ as now $y \prec_{\pi'} z$ holds true, due to (P4) in the definition of $\pi'$. 

So we may assume that $y \in C$. As $xz \in E(G)$ and $yz \notin E(G)$, it follows that $z \in I$ and $x \in C$. Since $N(z) \subseteq N(x) \cup \{x\}$ and $z \prec_{\sigma} x$ it follows from \cref{lemma:plus} that $z \prec_{\rho} x$ and, therefore, $x \not\prec_{\pi'} z$. By the definition of $\pi'$, it holds that $z \in A$. Let $z \in I_i$. Then $x \in C_i$ since $xz \in E(G)$ and $x \not\prec_{\pi'} z$. As $y \prec_{\pi'} x$, vertex $y$ must be in $C_j$ with $j \leq i$. Therefore, $yz \in E(G)$; a contradiction.
\end{proof}

After an instance of \cref{algo:ls} has visited all the clique vertices of a split graph, the labels of the remaining independent vertices do not change anymore. Thus, a vertex $x$ whose label is now smaller than the label of another vertex $y$ will be taken after $y$. Therefore, it is not enough to consider only the premature vertices of $\pi$. Instead, we must also consider all independent vertices $x$ that $\pi$ forces to be left of another independent vertex $y$ whose label is larger than the label of $x$ if all clique vertices are visited. In the case of MCS this is sufficient to characterize partial orders that are extendable. 

\begin{lemma}\label{lemma:split-mcs}
Let $G$ be a split graph with a split partition consisting of the clique $C$ and the independent set $I$. Let $\pi$ be a partial order on $V(G)$. Let $A := \{ u \in I~|~\exists v \in V(G) \text{ with } v \in C \text{ or } |N(u)| < |N(v)| \text{ such that } u \prec_\pi v\}$. There is an MCS ordering which is a linear extension of $\pi$ if and only if $(\pi, A)$ fulfills the nested property.
\end{lemma}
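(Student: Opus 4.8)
The plan is to reduce everything to a single MCS-specific phenomenon---the cardinality rule---and otherwise reuse the structural results already established. The key observation is that, in contrast to the premature set $\{v\in I\mid \exists w\in C,\ v\prec_\pi w\}$ used in \cref{lemma:nested-necessary}, the set $A$ defined here is \emph{exactly} the set of independent vertices that every MCS ordering extending $\pi$ is forced to place to the left of some clique vertex. Making this precise in both directions is the heart of the matter; once the cardinality bookkeeping is in place, the remaining work is a direct transcription of the arguments behind \cref{lemma:split-order}, \cref{lemma:nested-necessary} and \cref{lemma:nested-sufficient}.

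For necessity, let $\sigma$ be an MCS ordering extending $\pi$. I would first show that every $u\in A$ is a premature vertex of $\sigma$. If $u\prec_\pi v$ with $v\in C$, this is immediate since $\sigma$ extends $\pi$. If $u\prec_\pi v$ with $v\in I$ and $|N(u)|<|N(v)|$, I argue by contradiction: were $u$ not premature, all of $C$ would precede $u$, so at the step at which $u$ is numbered every neighbor of $u$ and of $v$ (all lying in $C$) is already numbered; hence the labels of $u$ and $v$ have sizes $|N(u)|$ and $|N(v)|$, and since MCS numbers a maximum-label vertex while $v$ is still unnumbered we would get $|N(u)|\ge|N(v)|$, a contradiction. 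Next I record two facts about $\sigma$: (i) if $x\in I$ lies to the left of a premature vertex, then $x$ is itself premature, since a clique vertex witnessing the right vertex's prematurity also lies to the right of $x$; and (ii) for premature $x,y\in I$ with $x\prec_\sigma y$ one has $N(x)\subseteq N(y)$---this is precisely the Claim~1 computation of \cref{lemma:nested-necessary}, which uses only \cref{lemma:split-order} and the prematurity of $x$ and $y$, not their membership in any particular set. Now (N1) is immediate from the definition of $A$, and (N2) follows from (ii) applied across $A$. For (N3), given $x\prec_\pi y$ with $y\in C_i\cup I_i$, I combine (i), (ii) and the two clauses of \cref{lemma:split-order} following the pattern of Claims~2 and~3 of \cref{lemma:nested-necessary}; the one genuinely new point is the case $y\in I_i$ where $y\prec_\pi v$ holds only for some $v\in I$ with $|N(y)|<|N(v)|$ (and for no clique vertex): there I use (i) and (ii) to get $N(x)\subseteq N(y)$, whence $|N(x)|\le|N(y)|<|N(v)|$ and therefore $x\in A$, which is what is needed to place $x$ in some $I_j$ with $j\le i$. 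Property (N4) is the verbatim leftmost-vertex-of-$I_i$ argument of \cref{lemma:nested-necessary}.

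For sufficiency, assume $(\pi,A)$ fulfils the nested property, put $\pi'=\pi^N(\pi,A)$, let $\rho$ be any linear extension of $\pi'$, and set $\sigma=\mathrm{MCS}^+(\rho)$. By \cref{lemma:nested-sufficient}, $\sigma$ already satisfies every constraint $x\prec_{\pi'}y$ for which at least one of $x,y$ lies in $A\cup C$; since $\pi\subseteq\pi'$, the only constraints of $\pi$ that could be violated are those $x\prec_\pi y$ with $x,y\in I\setminus A$. For these I would first use clause (P3) together with \cref{lemma:nested-sufficient} to conclude that in $\sigma$ every clique vertex precedes every vertex of $I\setminus A$; hence no vertex of $I\setminus A$ is premature, and when such a vertex is numbered its entire neighborhood (contained in $C$) is already numbered, so its label has size equal to its degree. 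Because $x\notin A$ and $y\in I$, the definition of $A$ forces $|N(x)|\ge|N(y)|$; if this inequality is strict, MCS numbers $x$ before $y$, and in case of equality the tie is resolved by $\rho$, where $x\prec_\rho y$ since $\rho$ extends $\pi\subseteq\pi'$. Either way $x\prec_\sigma y$, so $\sigma$ is an MCS ordering extending $\pi$.

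The step I expect to be the main obstacle---and the only place the specific MCS rule is used---is the equivalence ``$u\in A$'' $\Leftrightarrow$ ``$u$ is forced to be premature''. In the forward direction this is the cardinality comparison of labels once the clique is exhausted, and in the backward direction it is the dual statement that two non-premature independent vertices are ordered purely by degree, with ties settled by $\rho$. Everything else is a faithful reuse of the premature-vertex structure already developed.
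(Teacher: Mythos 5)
Your proposal is correct, and its sufficiency half coincides with the paper's own argument almost word for word: take $\sigma = $ MCS$^+(\rho)$ for a linear extension $\rho$ of $\pi^N(\pi,A)$, invoke \cref{lemma:nested-sufficient} for every constraint meeting $A \cup C$, use (P3) to force all of $C$ before $I \setminus A$ in $\sigma$ so that the labels of the remaining independent vertices equal their degrees, and settle each pair $x \prec_\pi y$ with $x,y \in I \setminus A$ by $|N(x)| \geq |N(y)|$ together with the $\rho$-tie-break. Where you genuinely diverge is the necessity direction. The paper does not re-derive the nested property at all: it augments $\pi$ to a relation $\pi^*$ by adding, for each $u \prec_\pi v$ with $u,v \in I$ and $|N(u)| < |N(v)|$, a tuple $(u,w)$ with $w \in C \setminus N(u)$; it then shows, by the same exhausted-clique cardinality argument you use, that every MCS ordering extending $\pi$ also extends $\pi^*$, so the reflexive transitive closure $\pi^T$ is a partial order whose premature vertices are exactly $A$, and \cref{lemma:nested-necessary} is applied to $\pi^T$ as a black box, with (N1), (N3) and (N4) transferring back to $\pi \subseteq \pi^T$. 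You instead prove directly that every $u \in A$ is premature in any extending ordering and re-run the internal claims of \cref{lemma:nested-necessary} in white-box fashion, correctly observing that Claims 1--3 and the (N4) argument only use prematurity of the relevant vertices together with \cref{lemma:split-order}, not the particular definition of $A$ there; your patch at the one place where that definition is invoked --- deducing $x \in A$ from $N(x) \subseteq N(y)$, $|N(y)| < |N(v)|$ and transitivity of $\pi$ --- is exactly the right fix. The trade-off: the paper's reduction is shorter and reuses \cref{lemma:nested-necessary} verbatim, but it rests on the identification of $A$ with the premature vertices of the \emph{closed} order $\pi^T$ (a priori, closure chains through cardinality-witnessed tuples could create clique predecessors for vertices outside $A$), a point the paper dispatches in a single sentence about $\pi^*$; your version duplicates some case analysis but is self-contained on precisely this point and makes explicit which hypotheses of the reused claims are actually load-bearing.
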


\begin{proof}
First we show that it is necessary that $A$ fulfills the nested property. To this end, we create the relation $\pi^*$ by adding the following tuples to $\pi$: If $u,v \in I$, $u \prec_\pi v$ and $|N(u)| < |N(v)|$, then $(u,w) \in \pi^*$ where $w$ is some element of $C \setminus N(u)$. Note that such a vertex $w$ must exist since $v$ has at least one neighbor that $u$ does not have. We claim that any MCS ordering $\sigma$ that extends $\pi$ is also a linear extension of $\pi^*$. 
Assume for contradiction that the tuple $(u,w)$ in $\pi^* \setminus \pi$ is not fulfilled by $\sigma$, i.e., $w \prec_\sigma u$. As $w \in C$ and $w \notin N(u)$, it follows from \cref{lemma:split-order} that all vertices of $C$ are to the left of $u$ in $\sigma$. Since $\sigma$ is a linear extension of $\pi$, it holds that $u \prec_\sigma v$. However, this is contradiction since the number of visited neighbors of $u$ would have been strictly smaller than the number of visited neighbors of $v$ when $u$ was chosen by the MCS. Therefore, $\sigma$ is a linear extension of $\pi^*$ and the reflexive and transitive closure $\pi^T$ of $\pi^*$ is a partial order. Note that a vertex $x$ is an element of $A$ if and only if there is a vertex $y \in C$ with $y \prec_{\pi^*} x$. Thus, by \cref{lemma:nested-necessary} it follows that $(\pi^T, A)$ fulfills the nested property. Since $\pi$ is a subset of $\pi^T$, the tuple $(\pi, A)$ also fulfills the nested property.

It remains to show that the property is also sufficient. Assume that $(\pi,A)$ fulfills the nested property. Let $\rho$ be a linear extension of $\pi^N(\pi, A)$ and $\sigma$ be the ordering produced by MCS$^+(\rho)$. Then, by \cref{lemma:nested-sufficient}, the ordering $\sigma$ fulfills the condition $x \prec_\pi y$ if $\{x,y\} \cap (A \cup C) \neq \emptyset$. It remains to show that $x \prec_\sigma y$ also holds for any pair $x,y \notin A \cup C$ with $x \prec_\pi y$. As $x$ is not in $A$, it must hold that $|N(x)| \geq |N(y)|$. Due to the definition of $\pi^N(\pi, A)$ and \cref{lemma:nested-sufficient}, for any vertex $z  \in C$ it holds that $z \prec_\sigma x$ and $z \prec_\sigma y$. Thus, vertex $x$ must be to the left of $y$ in $\sigma$ since its number of visited numbers was at least as large as the number of visited neighbors of $y$ and $x \prec_\rho y$.
\end{proof}

This lemma implies a linear-time algorithm for the PSOP of MCS on split graphs.

\begin{theorem}\label{thm:mcs-algo}
Given a split graph $G$ and a partial order $\pi$ on $V(G)$, we can solve the partial search order problem of MCS in time $\O(|V(G)| + |E(G)| + |\pi|)$.
\end{theorem}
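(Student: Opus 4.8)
The plan is to turn \cref{lemma:split-mcs} into an algorithm: since that lemma shows that $\pi$ is extendable by an MCS ordering if and only if $(\pi,A)$ fulfils the nested property (for the specific set $A$ defined there), it suffices to (i) fix a split partition $(C,I)$, (ii) build the set $A$, and (iii) test conditions (N1)--(N4), all within the claimed bound. A split partition of a split graph can be found in time $\O(|V(G)|+|E(G)|)$ (for instance by bucket-sorting the degree sequence), so I fix one such partition $(C,I)$; the equivalence of \cref{lemma:split-mcs} holds for any fixed split partition, since the existence of an extending MCS ordering does not depend on the choice. I also precompute $|N(v)|$ for every vertex in linear time.

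To build $A$ I scan $\pi$ once: for each pair $(u,v)\in\pi$ with $u\in I$ I put $u$ into $A$ whenever $v\in C$ or $|N(u)|<|N(v)|$, which costs $\O(|\pi|)$ given the precomputed degrees. With this $A$, condition (N1) holds automatically, because any $x\in I$ with $x\prec_\pi y$ and $y\in C$ is placed into $A$ by construction. Hence only (N2)--(N4) remain to be verified.

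The step I expect to be the main obstacle is (N2), which requires that the neighbourhoods $\{N(u)\mid u\in A\}$ form a chain under inclusion; a naive pairwise test would be quadratic. Here I would exploit that $N(u)\subseteq C$ for every $u\in I$ and sort the vertices of $A$ by $|N(u)|$ using counting sort in time $\O(|A|+|C|)$. By transitivity of inclusion, the family is a chain if and only if, in this sorted order $u_1,\dots,u_{|A|}$, one has $N(u_\ell)\subseteq N(u_{\ell+1})$ for every consecutive pair (equal sizes then force equal neighbourhoods). Each such containment can be tested with a marking array: mark $N(u_{\ell+1})$, check that every vertex of $N(u_\ell)$ is marked, then unmark. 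Since each neighbourhood is marked and read only a constant number of times, the total cost is $\O(\sum_{u\in A}|N(u)|)=\O(|E(G)|)$. If the test succeeds I read off the partition $C_1,I_1,\dots,C_k,I_k$ (vertices of $A$ sharing a neighbourhood form one $I_i$, and $C_i=N_i\setminus N_{i-1}$ for the $i$-th distinct neighbourhood $N_i$) and assign to every vertex of $\bigcup_i(C_i\cup I_i)$ its level $i$, all in linear time.

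It then remains to verify (N3) and (N4) by one further scan of $\pi$. For (N3) I check, for each $(x,y)\in\pi$ such that $y$ has a level $i$, that $x$ also has a level $j$ with $j\le i$, rejecting otherwise. For (N4) I record, for each level $i$, whether some vertex $x\in I_i$ already satisfies $x\prec_\pi y$ for some $y\in C_i$ (detected via $x\in A$, $y\in C$, and equal levels), and reject as soon as a second distinct such vertex appears at that level. Both scans cost $\O(|\pi|)$. If (N2)--(N4) all pass I accept, otherwise I reject; correctness is immediate from \cref{lemma:split-mcs}, and summing the costs yields $\O(|V(G)|+|E(G)|+|\pi|)$. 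If an explicit witnessing ordering is also wanted, I would additionally build a linear extension $\rho$ of $\pi^N(\pi,A)$ directly from the level structure (ordering the levels increasingly and placing $I\setminus A$ after $C$, which avoids materialising the transitive closure) and return MCS$^+(\rho)$, whose correctness is guaranteed by \cref{lemma:nested-sufficient}.
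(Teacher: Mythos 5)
Your proposal is correct and follows essentially the same route as the paper's proof: compute the set $A$ from \cref{lemma:split-mcs} in one scan of $\pi$, verify condition (N2) in $\O(|V(G)|+|E(G)|)$ time, and check (N3) and (N4) with a second scan of $\pi$, with correctness immediate from the lemma. The only difference is that you make the (N2) chain check self-contained (counting sort by neighborhood size plus consecutive containment tests with a marking array) where the paper cites the algorithm from Proposition~14 of~\cite{beisegel2019end-vertex}, and your optional MCS$^+$ witness construction is extra detail the paper's decision procedure omits.
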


\begin{proof}
First we compute the set $A$ by iterating through all elements $(u,v) \in \pi$ and checking whether one of the two properties given in \cref{lemma:split-mcs} hold. This can be done in time $\O(|\pi|)$. To check condition $(N2)$ in time $\O(|V(G)| + |E(G)|)$ we use the algorithm described in the proof of Proposition~14 in~\cite{beisegel2019end-vertex}. During this process, we can label every vertex in $A \cup N(A)$ with the index of its $I$-set or $C$-set and all other vertices with the label $\infty$. To check condition $(N3)$ we iterate through $\pi$ a second time. Now we check whether for all $x \prec_\pi y$ it holds that the label of $x$ is smaller or equal to the label of $y$ and whether there is at most one vertex $x$ in set $I_i$ for which there exists a vertex $y$ in $C_i$ with $x \prec_\pi y$. 
\end{proof}

This is a generalization of the linear-time algorithms for the end-vertex problem~\cite{beisegel2019end-vertex} and the \cf-tree recognition problem~\cite{beisegel2019recognizing} of MCS on split graphs. 

For LBFS there is a characterization of extendable partial orders that is similar to \cref{lemma:split-mcs}. However, due to the more complex label structure of LBFS, the result is slightly more complicated and uses OBA-orderings.

\begin{lemma}\label{lemma:split-lbfs}
Let $G$ be a split graph with a split partition consisting of the clique $C$ and the independent set $I$. Let $\pi$ be a partial order on $V(G)$. Let $A := \{ u \in I~|~\exists v \in V(G) \text{ with } v \in C \text{ or } N(u) \subsetneq N(v) \text{ such that } u \prec_\pi v\}$. Let $\pi'$ be the nested partial order $\pi^N(\pi, A)$ and let $\R$ be the following relation:

\vspace{-0.5cm}
\begin{align*}
{\cal R} = \{(X,Y)~|~&\exists x,y \in I \setminus A \text{ with } X = N(x) \setminus N(y),~Y = N(y) \setminus N(x),~x \prec_\pi y\} \\ &\cup \{(\{x\},\{y\})~|~x,y \in C, x \prec_{\pi'} y\}.
\end{align*}

There is an LBFS ordering extending $\pi$ if and only if the tuple $(\pi, A)$ fulfills the nested property and there is an OBA-ordering for $(C, {\cal Q}, {\cal R})$ where $\cal Q$ is the ground set of $\cal R$. 
\end{lemma}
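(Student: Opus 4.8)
I would prove the two directions separately, leaning heavily on the machinery already established for the nested property. The overall strategy mirrors the structure of \cref{lemma:split-mcs}, but where MCS compared neighborhood \emph{sizes}, LBFS compares neighborhoods by \emph{inclusion}, which is exactly what forces the extra OBAP-instance to appear. The key conceptual point is that after all clique vertices have been numbered, the label of an independent vertex $u$ is completely determined by $N(u) \cap C = N(u)$, and LBFS breaks ties among the remaining independent vertices lexicographically by these labels. So forcing $u \prec_\sigma v$ for two independent vertices $u,v \notin A$ with incomparable neighborhoods requires some neighbor of $u$ that is \emph{not} a neighbor of $v$ to appear before every neighbor of $v$ that is not a neighbor of $u$ — precisely the one-before-all condition encoded by the tuple $(N(x)\setminus N(y),\, N(y)\setminus N(x)) \in \R$.

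\textbf{Necessity.}
First I would assume an LBFS ordering $\sigma$ extends $\pi$ and show both that $(\pi,A)$ fulfills the nested property and that an OBA-ordering for $(C,\Q,\R)$ exists. Since every LBFS ordering is an MNS ordering, the nested property of $(\pi,A)$ should follow from \cref{lemma:nested-necessary} after the same preprocessing trick used in \cref{lemma:split-mcs}: augment $\pi$ to a relation $\pi^*$ by adding, for each pair $x,y \in I\setminus A$ with $x \prec_\pi y$ and $N(x)\subsetneq N(y)$, a tuple forcing $x$ before some clique vertex in $N(y)\setminus N(x)$; one checks $\sigma$ still extends $\pi^*$, so its closure is a partial order whose premature set is $A$, and \cref{lemma:nested-necessary} applies. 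For the OBAP part I would read off the desired OBA-ordering directly from $\sigma$ restricted to $C$: for each defining tuple $(N(x)\setminus N(y),\, N(y)\setminus N(x))$ with $x\prec_\pi y$ and $x,y\in I\setminus A$, the clique vertices being numbered in the order induced by $\sigma$ must satisfy the one-before-all property, since otherwise every neighbor of $x$ absent from $N(y)$ would be numbered after some neighbor of $y$ absent from $N(x)$, which (because $\sigma$ extends $\pi$ so $x\prec_\sigma y$, and $x,y$ are non-premature so all of $C$ precedes them) would make the label of $x$ lexicographically smaller than that of $y$ at the moment $x$ is chosen — contradicting that LBFS chose $x$ first. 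The tuples of the second kind in $\R$ are just the clique restriction of $\pi'$ and are automatically respected by \cref{lemma:nested-necessary}.

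\textbf{Sufficiency.}
For the converse I would take an OBA-ordering $\tau$ of $C$ witnessing $\R$, build a linear extension $\rho$ of $\pi'=\pi^N(\pi,A)$ whose restriction to $C$ agrees with $\tau$ (this is possible because the clique constraints of $\pi'$ are among the $\R$-tuples, so $\tau$ already extends them), and run $\sigma=\mathrm{LBFS}^+(\rho)$. By \cref{lemma:nested-sufficient}, $\sigma$ already satisfies every tuple $x\prec_\pi y$ with $\{x,y\}\cap(A\cup C)\neq\emptyset$. The remaining work is to verify $x\prec_\sigma y$ for pairs $x,y\in I\setminus A$ with $x\prec_\pi y$. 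Here $|N(x)|\ge|N(y)|$ fails to suffice as in the MCS case; instead I would use that all clique vertices precede $x$ and $y$ in $\sigma$ (again from \cref{lemma:nested-sufficient} applied to the clique), so by the time either is eligible its label equals its full neighborhood. If $N(x)\subseteq N(y)$ then $x\notin A$ forces $N(x)=N(y)$, and the tie is broken by $x\prec_\rho y$; if $N(x)$ and $N(y)$ are incomparable, the OBA property of $\tau$ guarantees a vertex of $N(x)\setminus N(y)$ is numbered before all of $N(y)\setminus N(x)$, so the label of $x$ is lexicographically larger than that of $y$ at the decisive step and LBFS picks $x$ first.

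\textbf{Main obstacle.}
The delicate step is sufficiency in the incomparable-neighborhood case: I must argue carefully that at the first moment one of $x,y$ becomes eligible, \emph{all} clique vertices have already been numbered, so that the labels genuinely equal the neighborhoods and the lexicographic comparison reduces cleanly to the one-before-all witness in $\tau$. Establishing that no independent vertex is chosen prematurely in a way that corrupts this comparison — i.e.\ reconciling the $\A^+$ tie-breaking with the OBA-ordering across all the incomparable pairs simultaneously — is where the argument requires the most care, and is exactly the place where the $\R$-relation was engineered to carry precisely the information needed.
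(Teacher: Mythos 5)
Your proposal reproduces the paper's overall architecture (the $\pi^*$ augmentation trick to get the nested property from \cref{lemma:nested-necessary}, reading the OBA-ordering off $\sigma$ restricted to $C$, and running $\mathrm{LBFS}^+$ on a linear extension $\rho$ of $\pi'$ compatible with $\tau$), but two steps that carry the real weight are missing or wrong. First, in the necessity direction you justify the OBA property of $\sigma$ restricted to $C$ by asserting that since $x,y \in I \setminus A$ they are "non-premature so all of $C$ precedes them" in $\sigma$. That does not follow: $x \notin A$ only means $\pi$ does not \emph{force} $x$ to be premature; an arbitrary LBFS extension $\sigma$ of $\pi$ may still number $x$ before clique vertices (LBFS can even start at $x$). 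Your label comparison "at the moment $x$ is chosen" is therefore invalid when $x$ is premature in $\sigma$. The paper instead splits on where the offending vertex $b \in N(y) \setminus N(x)$ sits: if $b \prec_\sigma x$, the 4-point condition (\cref{lemma:4point-lbfs}) applied to $b,x,y$ yields a contradiction directly; if $x \prec_\sigma b$, then $x$ \emph{is} premature in $\sigma$ and \cref{lemma:split-order} forces every non-neighbor of $x$ in $C$ to the right of every neighbor of $x$, contradicting the assumed OBA violation since $N(x)\setminus N(y) \neq \emptyset$ (as $x \notin A$). This case is fixable along exactly those lines, but as written your argument omits it and rests on a false blanket claim.

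Second, in the sufficiency direction the decisive step — transferring the one-before-all witness from $\tau$ (equivalently $\rho$) to $\sigma = \mathrm{LBFS}^+(\rho)$ — is exactly what you flag as the "main obstacle" and then leave unresolved; naming the obstacle is not closing it, and note that $\sigma$ restricted to $C$ need not equal $\tau$, so "the OBA property of $\tau$ guarantees" nothing about $\sigma$ without further work. The paper closes this as follows: if $c \in N(a)\setminus N(b)$ precedes all of $N(b)\setminus N(a)$ in $\rho$ but some $d \in N(b)\setminus N(a)$ satisfies $d \prec_\sigma c$, then \cref{lemma:plus} produces a vertex $z$ with $z \prec_\sigma d$, $zd \in E(G)$ and $zc \notin E(G)$; since $c,d \in C$, necessarily $z \in I$. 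If $z \notin A$, property (P3) of the nested partial order gives $d \prec_{\pi'} z$, contradicting \cref{lemma:nested-sufficient} because $z \prec_\sigma d$ and $d \in C$; if $z \in A$ with $z \in I_i$, then $d \in \bigcup_{j \leq i} C_j$ while $c \notin \bigcup_{j \leq i} C_j$, so (P2) gives $d \prec_{\pi'} c$, contradicting $c \prec_\rho d$. Without this argument (or an equivalent), your sufficiency direction does not go through. The remaining ingredients of your plan — using (P3) with \cref{lemma:nested-sufficient} to conclude all of $C$ precedes $I \setminus A$ in $\sigma$, the observation that $N(x) \subseteq N(y)$ forces $N(x) = N(y)$ for $x \notin A$ with the tie broken by $\rho$, and the handling of the clique tuples of $\R$ via \cref{lemma:nested-necessary} — are all sound and match the paper.
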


\begin{proof}
Assume that the two conditions are fulfilled and let $\tau$ be an OBA ordering for the input $(C, {\cal Q}, {\cal R})$. Since $\tau$ fulfills the constraints of $\pi'$ on $C$, we can insert all elements of $I$ into $\tau$ such that the resulting ordering $\rho$ is a linear extension of $\pi'$. Let $\sigma$ be the LBFS$^+(\rho)$ ordering of $G$. Due to \cref{lemma:nested-sufficient}, for any $x,y \in A \cup C$ with $x \prec_\pi y$ it holds $x \prec_\sigma y$. Assume for contradiction that there are vertices $a,b \in I \setminus A$ with $a \prec_\pi b$ but $b \prec_\sigma a$. Since $\rho$ is a linear extension of $\pi$, it follows from \cref{lemma:plus} that $N(b) \setminus N(a)$ is not empty. 
Since $a \notin A$, it holds that $N(a) \not \subseteq N(b)$ and, therefore, the set $N(a) \setminus N(b)$ is also not empty. Due to the construction of $\R$ and $\rho$, there is a vertex $c \in N(a) \setminus N(b)$ that is to the left of any element of $N(b) \setminus N(a)$ in $\rho$. If this also holds in $\sigma$, then $b$ cannot be to the left of $a$ in $\sigma$, due to the 4-point condition of LBFS in \cref{lemma:4point-lbfs}. Thus, there must be a vertex $d \in N(b) \setminus N(a)$ that is to the left of $c$ in $\sigma$. Since $c \prec_\rho d$, it follows from  \cref{lemma:plus} that there is a vertex $z$ with $z \prec_\sigma d$, $zd \in E(G)$ and $zc \notin E(G)$. Since both $c$ and $d$ are elements of $C$, the vertex $z$ must be an element of $I$. If $z \notin A$, then $d \prec_{\pi'} z$. This contradicts \cref{lemma:nested-sufficient} since $z \prec_\sigma d$ and $d \in C$. Therefore, we may assume that $z \in A$. However, this means that $d$ is an element of some $C_i$ and $c$ is not an element of a $C_j$ with $j \leq i$. Thus, $d \prec_{\pi'} c$, a contradiction to the fact that $c \prec_\rho d$ and that $\rho$ is a linear extension of $\pi'$.

We will now show that the conditions are also necessary. Assume that the LBFS ordering $\sigma$ is a linear extension of $\pi$. First we show that $(\pi, A)$ fulfills the nested property in a similar way we have done it in \cref{lemma:split-mcs}. We create the relation $\pi^*$ by adding the following tuples to $\pi$: If $u,v \in I$, $u \prec_\pi v$ and $N(u) \subsetneq N(v)$, then $(u,w) \in \pi^*$ where $w$ is some vertex in $C \setminus N(u)$. Note that such a vertex $w$ must exist since $v$ has at least one neighbor that $u$ does not have. We claim that $\sigma$ is also a linear extension of $\pi^*$. Assume for contradiction that the tuple $(u,w)$ in $\pi^* \setminus \pi$ is not fulfilled by $\sigma$, i.e., $w \prec_\sigma u$. As $w \in C$ and $w \notin N(u)$, it follows from \cref{lemma:split-order} that all vertices of $C$ are to the left of $u$ in $\sigma$. Since $\sigma$ is a linear extension of $\pi$, it holds that $u \prec_\sigma v$. However, this is contradiction since the label of $u$ would have been strictly smaller than the label of $v$ when $u$ was chosen by LBFS. Therefore, $\sigma$ is a linear extension of $\pi^*$ and the reflexive and transitive closure $\pi^T$ of $\pi^*$ is a partial order. Note that a vertex $x$ is an element of $A$ if and only if there is a vertex $y \in C$ with $y \prec_{\pi^*} x$. Thus, by \cref{lemma:nested-necessary} it follows that $(\pi^T, A)$ fulfills the nested property. Since $\pi$ is a subset of $\pi^T$, the tuple $(\pi, A)$ also fulfills the nested property.

To show that the second condition also holds, we will prove that the ordering of the vertices of $C$ in $\sigma$ builds an OBA ordering for $(C, {\cal Q}, {\cal R})$. Assume for contradiction that there is a tuple $(X,Y) \in \R$ that is not fulfilled by $\sigma$, i.e., there is a vertex $b \in Y$ that is to the left of any vertex $a \in X$ in $\sigma$. For any $u,v \in C$ with $u \prec_{\pi'} v$ it follows from \cref{lemma:nested-necessary} that $u \prec_\sigma v$. Therefore, we may assume that $X = N(x) \setminus N(y)$ and $Y = N(y) \setminus N(x)$ for some $x,y \in I \setminus A$ with $x \prec_\pi y$. It holds that $x \prec_\sigma y$, that $by \in E(G)$ and that $bx \notin E(G)$. By the choice of $b$, for any vertex $z \prec_\sigma b$ the edge $yz$ is in $E(G)$ or the edge $xz$ is not in $E(G)$. If $b \prec_\sigma x$, then this implies that $\sigma$ is not an LBFS ordering, due to \cref{lemma:4point-lbfs}. Otherwise, we can observe that the set $X$ is not empty since in this case $N(x) \subsetneq N(y)$ would hold and this would imply that $x \in A$. Let $a \in X$, then it holds that $x \prec_\sigma b \prec_\sigma a$. This, however, contradicts \cref{lemma:split-order} as $x \in I$, $b \in C$, $bx \notin E(G)$ but $ax \in E(G)$.
\end{proof}

Again, this characterization leads to an efficient algorithm for the PSOP of LBFS on split graphs. However, its running time is not linear.

\begin{theorem}\label{thm:lbfs-algo}
Given a split graph $G$ and a partial order $\pi$ on $V(G)$, we can solve the partial search order problem of LBFS in time $\O(|V(G)| \cdot |\pi|)$.
\end{theorem}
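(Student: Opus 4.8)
The plan is to turn the characterization of \cref{lemma:split-lbfs} directly into a decision procedure and then to show that every step stays within the claimed budget. Given $G$ together with a split partition into a clique $C$ and an independent set $I$ (which we may assume is part of the input or compute in linear time), the algorithm first determines the set $A = \{u \in I \mid \exists v,\ u \prec_\pi v \text{ with } v \in C \text{ or } N(u) \subsetneq N(v)\}$, then tests whether $(\pi, A)$ satisfies the nested property (N1)--(N4), and finally builds and solves the one-before-all instance $(C, \Q, \R)$ from \cref{lemma:split-lbfs}. By that lemma the input is a yes-instance if and only if both tests succeed, so correctness is immediate and the entire work lies in bounding the running time.

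Computing $A$ and verifying the nested property would proceed exactly as in the MCS case of \cref{thm:mcs-algo}, the only change being that the degree comparison $|N(u)| < |N(v)|$ is replaced by the inclusion test $N(u) \subsetneq N(v)$. For $u,v \in I$ both neighborhoods lie inside $C$, so after marking $N(v)$ in a Boolean array a single test costs $\O(|V(G)|)$, and over the at most $|\pi|$ relevant tuples this contributes $\O(|V(G)| \cdot |\pi|)$. Once every vertex is labelled with the index of its block $C_i$ or $I_i$ (via the routine used for \cref{thm:mcs-algo}, which only touches the edges incident to $A$, at most $|A|\cdot|C| = \O(|\pi|\cdot|V(G)|)$ of them), conditions (N1), (N3) and (N4) are checked in $\O(|\pi|)$ additional time.

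I would then construct $\R$. The tuples coming from pairs $x,y \in I\setminus A$ with $x \prec_\pi y$ are obtained by computing the set differences $N(x)\setminus N(y)$ and $N(y)\setminus N(x)$ in $\O(|V(G)|)$ time each, yielding at most $|\pi|$ tuples of total element size $\O(|V(G)|\cdot|\pi|)$. The clique part of $\R$, which must force every OBA-ordering of $C$ to extend $\pi'|_C$ for $\pi' = \pi^N(\pi,A)$, is where I expect the main obstacle: the layering $C_1 \prec \dots \prec C_k \prec (C\setminus N(A))$ implied by (P2) is an ``all-before-all'' constraint whose transitive reduction is the complete bipartite relation between consecutive blocks and can have $\Theta(|C|^2)$ edges even when $|\pi|$ is tiny, so it must not be written out literally. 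To avoid this blow-up I would augment the ground set by $k$ auxiliary barrier elements $b_1,\dots,b_k$ and add the singleton constraints $(\{c\},\{b_i\})$ for $c \in C_i$ and $(\{b_i\},\{c'\})$ for $c' \in C_{i+1}$ (placing $C\setminus N(A)$ after $b_k$); transitivity then reproduces the full layering using only $\O(|C|)$ tuples, and the barriers are stripped from the ordering afterwards. Together with the at most $|\pi|$ singletons from $\pi \cap (C\times C)$, the resulting instance has ground set of size $\O(|V(G)|)$, at most $\O(|V(G)| + |\pi|)$ relation tuples, and total element size $\O(|V(G)|\cdot|\pi|)$.

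Finally I would run \cref{algo:obaop} on this instance; by \cref{thm:algo-obaop} it decides solvability and returns the clique order in time linear in the instance size, i.e.\ $\O(|V(G)|\cdot|\pi|)$. Summing the four phases gives the stated bound. The hard part will be justifying the barrier gadget, namely certifying that the OBA-orderings of the augmented instance, restricted to $C$, are exactly the linear extensions of $\pi'|_C$ that satisfy the $I\setminus A$ constraints; once this equivalence is established, solvability of the compact instance coincides with that of the instance in \cref{lemma:split-lbfs}, so the lemma applies verbatim and the correctness argument carries over unchanged.
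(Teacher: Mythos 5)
Your overall route coincides with the paper's: compute $A$, test the nested property via the routine from \cref{thm:mcs-algo}, reduce to a one-before-all instance, solve it with \cref{algo:obaop}, and delegate correctness to \cref{lemma:split-lbfs}. The one place you diverge --- the barrier gadget --- is precisely where the proposal has a genuine gap, and it is a detour around a problem that does not exist. The paper opens its proof with the observation that $|V(G)| \leq |\pi| \leq |V(G)|^2$, because a partial order contains all reflexive tuples; consequently $\O(|V(G)|^2) \subseteq \O(|V(G)| \cdot |\pi|)$, and one may compute the nested partial order $\pi' = \pi^N(\pi,A)$ explicitly in $\O(|V(G)|^2)$ time and add \emph{every} pair of $\pi'$ as a singleton tuple $(\{x\},\{y\})$ --- in fact over the whole ground set $V(G)$, not just $C$, which simultaneously takes care of placing the vertices of $I$. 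Your worst case of ``$\Theta(|C|^2)$ edges even when $|\pi|$ is tiny'' cannot occur under this size convention, and without the convention the theorem's bound would be unattainable anyway, since merely reading a split graph can already cost $\Theta(|V(G)|^2)$ when $|\pi|$ counts few tuples.

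Beyond being unnecessary, the gadget is left unproven at exactly the step you yourself flag as ``the hard part,'' and as stated the compact instance is not obviously faithful: it encodes only the block layering plus the singletons from $\pi \cap (C \times C)$, whereas \cref{lemma:split-lbfs} requires all pairs of $\pi'$ restricted to $C$, and $\pi'$ is a reflexive-transitive \emph{closure} that can contain clique pairs generated by chains passing through independent vertices --- for instance $c \prec_\pi u$ with $u \in I_i$ followed by a (P2) or (P4) step and a further $\pi$-step back into $C$. Certifying that every such pair is either in $\pi \cap (C \times C)$ or a cross-block pair captured by the barriers needs a structural argument using the transitivity of $\pi$ together with (N3) and (N4) (it is (N4), for example, that rules out new same-block clique pairs arising via (P4)); you would additionally have to check that any OBA-ordering of $C$ meeting the $I \setminus A$ constraints and extending $\pi'$ on $C$ lifts to the augmented ground set. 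None of this is carried out in the proposal, so as written the argument could accept or reject incorrectly if the claimed equivalence failed; the paper's explicit $\O(|V(G)|^2)$-size encoding sidesteps the entire issue at no asymptotic cost.
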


\begin{proof}
First note that $|V(G)| \leq |\pi| \leq |V(G)|^2$ as $\pi$ contains all reflexive tuples. We can compute the set $A$ defined in \cref{lemma:split-lbfs} in time $\O(|\pi| \cdot |V(G)|)$ by iterating through the tuples of $\pi$ and comparing the neighborhoods of the two vertices in time $\O(|V(G)|)$. We check whether $A$ fulfills the property (N2) and compute the corresponding sets $I_i$ and $C_i$ in the same way we have done it in the proof of \cref{thm:mcs-algo}. Now we can compute the nested partial order $\pi' = \pi^N(\pi, A)$ in time $\O(|V(G)|^2)$. We extend the relation $\R$ defined in \cref{lemma:split-lbfs} by the set $S = \{(\{x\},\{y\})~|~x,y \in V(G),~x \prec_{\pi'} y\}$ and get the relation $\R'$. The size of $S$ is bounded by $\O(|V(G)|^2) \subseteq \O(V(G) \cdot |\pi|)$. The size of $\R' \setminus S$ is bounded by $\O(|\pi|)$ since we have inserted at most one tuple to $\R$ for any element of $\pi$. Since the size of the elements of the ground set of $\R'$ is bounded by $|V(G)|$, the total size of $\R'$, its ground set $\Q$ and the elements of $\Q$ is bounded by $\O(|V(G)| \cdot |\pi|)$. We compute an OBA ordering for the input $(V(G),\R',\Q)$ in time $\O(|V(G)| \cdot |\pi|)$ using \cref{algo:obaop}. Due to \cref{lemma:split-lbfs}, there is an LBFS ordering extending $\pi$ if and only if such an OBA ordering $\rho$ exists. The algorithm has a total running time in $\O(|V(G)| \cdot |\pi|)$.
\end{proof}

Unfortunately, the ideas of \cref{lemma:split-mcs,lemma:split-lbfs} cannot be directly adapted to the PSOP of MNS. A main difficulty of this problem seems to be the identification of independent vertices that have to be premature vertices. To illustrate this, we consider the example given in \cref{fig:split-example}. The defined partial order $\pi$ has no premature vertices. Furthermore, the set $A$ defined in \cref{lemma:split-lbfs} is empty for $\pi$. Nevertheless, for any MNS ordering $\sigma$ extending $\pi$, one of the vertices $f$ or $g$ has to be a premature vertex of $\sigma$.  

\begin{figure}[t]
\centering
\begin{tikzpicture}[vertex/.style={inner sep=2pt,draw,circle}, path/.style={decoration={snake, amplitude=0.3mm}, decorate}, edge/.style={-}, noedge/.style={dotted}, scale=1.5]
\footnotesize
\node[vertex,label={135:$a$}]  (1) at (0,0) {};
\node[vertex,label={45:$b$}]  (2) at (1,0) {};
\node[vertex,label={90:$c$}]  (3) at (0.5,0.75) {};
\node[vertex,label={135:$d$}]  (4) at (0,0.575) {};
\node[vertex,label={45:$e$}]  (5) at (1,0.575) {};
\node[vertex,label={180:$f$}]  (6) at (-0.5,0) {};
\node[vertex,label={0:$g$}]  (7) at (1.5,0) {};

\draw (1) -- (2) -- (3) -- (5) -- (2) -- (7);
\draw (1) -- (3) -- (4) -- (1) -- (6);
\end{tikzpicture}
\caption{A split graph consisting of clique $\{a,b,c\}$ and independent set $\{d,e,f,g\}$. Let $\pi$ be the reflexive and transitive closure of the relation $\{(f,e), (g,d)\}$. There is no MCS ordering extending $\pi$ since the set $A$ defined in \cref{lemma:split-mcs} contains both $f$ and $g$ and, thus, $(\pi, A)$ does not fulfill the nested property. There is neither an LBFS ordering extending $\pi$ as there is no OBA ordering for the relation $\R = \{(\{a\},\{b,c\}),~(\{b\}, \{a,c\})\}$ defined in \cref{lemma:split-lbfs}. However, the MNS ordering $(f,a,b,c,e,g,d)$ extends $\pi$.}\label{fig:split-example}
\end{figure}
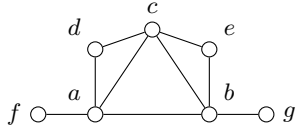

    \section{Further Research}
 
Besides the cases considered in this paper, there are several other combinations of graph classes and searches for which both the end-vertex problem and the \cf-tree recognition problem can be solved efficiently. Examples are the searches MNS and MCS on chordal graphs~\cite{beisegel2019end-vertex,beisegel2021recognition,cao2019graph}. Can all these results be generalized to the PSOP or is there a combination of graph search and graph class where the PSOP is hard but both the end-vertex problem and the \cf-tree recognition problem can be solved in polynomial time?

As mentioned in the introduction, the graph searches considered in this paper are used to solve several problems on graphs efficiently. This leads to the question whether the construction of a search ordering that extends a special partial order can be used in efficient algorithms for problems besides the end-vertex problem and the search tree recognition problem.

The algorithms given in this paper use the complete partial order as input. Using a Hasse diagram, it is possible to encode a partial order more efficiently. Since there are partial orders of quadratic size where the Hasse diagram has only linear size (e.g. total orders), it could be a good idea to study the running time of the algorithms for instances of the PSOP where the partial order is given as Hasse diagram.

	\bibliographystyle{plainurl}
	\bibliography{partial-orders}

\end{document}